
\catcode`\@=11
\newif\if@fewtab\@fewtabtrue
{\count255=\time\divide\count255 by 60
\xdef\hourmin{\number\count255}
\multiply\count255 by-60\advance\count255 by\time
\xdef\hourmin{\hourmin:\ifnum\count255<10 0\fi\the\count255}}
\def\ps@draft{\let\@mkboth\@gobbletwo
    \def\@oddfoot{\hbox to 7 cm{\tiny \versionno
       \hfil}\hskip -7cm\hfil\rm\thepage \hfil {\tiny\draftdate}}
    \def\@oddhead{}
    \def\@evenhead{}\let\@evenfoot\@oddfoot}
\def\draftdate{\number\month/\number\day/\number\year\ \ \ \hourmin }

\def\citen#1{\if@filesw \immediate\write \@auxout {\string\citation{#1}}\fi%
\@tempcntb\m@ne \let\@h@ld\relax \def\@citea{}%
\@for \@citeb:=#1\do {\@ifundefined {b@\@citeb}%
    {\@h@ld\@citea\@tempcntb\m@ne{\bf ?}%
    \@warning {Citation `\@citeb ' on page \thepage \space undefined}}%
    {\@tempcnta\@tempcntb \advance\@tempcnta\@ne
    \setbox\z@\hbox\bgroup\ifcat0\csname b@\@citeb \endcsname \relax
    \egroup \@tempcntb\number\csname b@\@citeb \endcsname \relax
    \else \egroup \@tempcntb\m@ne \fi \ifnum\@tempcnta=\@tempcntb
    \ifx\@h@ld\relax \edef \@h@ld{\@citea\csname b@\@citeb\endcsname}%
    \else \edef\@h@ld{\hbox{--}\penalty\@highpenalty
    \csname b@\@citeb\endcsname}\fi
    \else \@h@ld\@citea\csname b@\@citeb \endcsname \let\@h@ld\relax \fi}%
\def\@citea{,\penalty\@highpenalty\hskip.13em plus.13em minus.13em}}\@h@ld}
\def\@citex[#1]#2{\@cite{\citen{#2}}{#1}}%
\def\@cite#1#2{\leavevmode\unskip\ifnum\lastpenalty=\z@\penalty\@highpenalty\fi%
  \ [{\multiply\@highpenalty 3 #1%
  \if@tempswa,\penalty\@highpenalty\ #2\fi}]}   %
\makeatother 
\catcode`\@=12

\def\be            {\begin{equation}}
\def\bearl         {\begin{array}{l}}
\def\bearll        {\begin{array}{ll}}
\def\beginitemize  {\def\leftmargini{1.12em}~\\[-1.55em]\begin{itemize}\addtolength{\itemsep}{-6pt}}
\def\boti          {\,{\boxtimes}\,}
\def\Bun           {\mathrm{Bun}}
\def\C             {{\ensuremath{\mathcal C}}}
\def\cala          {{\ensuremath{\mathcal A}}}

\def\calc          {{\ensuremath{\mathcal C}}}
\def\cald          {{\ensuremath{\mathcal D}}}
\def\calf          {{\ensuremath{\mathcal F}}}
\def\calm          {{\ensuremath{\mathcal M}}}
\def\calw          {{\ensuremath{\mathcal W}}}

\def\cdo           {\,{\cdot}\,}

\def\cir           {\,{\circ}\,}
\def\cobord        {\mathrm{cobord}}
\def\complex       {{\ensuremath{\mathbb C}}}
\def\complexx      {{\ensuremath{\mathbb C}^\times_{}}}

\def\ee            {\end{equation}}

\def\eear          {\end{array}}
\def\End           {{\ensuremath{\mathrm{End}}}}

\def\erf           {\eqref }
\def\eq            {\,{=}\,}
\def\ev            {\mathrm{ev}}

\def\findim        {fini\-te-di\-men\-si\-o\-nal}

\def\Fun           {{\ensuremath{\mathrm{Fun}}}}
\def\gam           {\gamma}
\def\Gl            {G_{\rm l}}
\def\Gr            {G_{\rm r}}
\def\GVect         {\ensuremath G\mbox{-}\ensuremath{\mathrm{vect}}}
\def\Hom           {{\ensuremath{\mathrm{Hom}}}}

\newcommand\hsp[1] {\mbox{\hspace{#1 em}}}
\def\id            {\mbox{\sl id}}

\def\ii            {{\rm i}}

\def\iN            {\,{\in}\,}
\def\Ind           {{\ensuremath{\mathrm{Ind}}}}

\def\ko            {{\ensuremath{\Bbbk}}}
\newcommand\labl[1]{\label{#1}\ee}
\def\le            {\,{\leq}\,}
\def\Map           {{\ensuremath{\mathrm{Map}}}}
\def\Mod           {\mbox{-mod}}
\def\Modf          {\mbox{\footnotesize -mod}}

\newcommand\Nxl[1] {\\[-1.3em]\\[#1mm]}

\def\ohr           {\reflectbox{$\rho$}}

\def\onedim        {one-di\-men\-si\-o\-nal}
\def\op            {^{\mathrm{op}}}
\def\opcl          {^{\mathrm{op/cl}}}
\def\oti           {\,{\otimes}\,}

\def\qquand        {\qquad{\rm and}\qquad}

\def\rmd           {{\rm d}}
\def\rev           {{\mathrm{rev}}}
\def\rog           {\varrho_\gamma}
\def\sll           {\,{\backslash}\hspace{-2.5pt}{\backslash}\hspace{.6pt}} 

\def\srr           {\hspace{-.2pt}\reflectbox{$\backslash$}\hspace{-2.5pt}\reflectbox{$\backslash$}\hspace{-.6pt}}
\def\srrad         {\srr_{\!_{\text{ad}}}}          
\def\tBun          {\widetilde{\mathrm{Bun}}}
\def\tft           {\mathrm{tft}}

\def\thetagamma    {\vartheta_\gamma}

\def\Times         {\,{\times}\,}
\def\To            {\,{\to}\,}

\def\Vect          {\ensuremath{\mathrm{vect}}}
\def\Vectc         {\ensuremath{\mathrm{vect}_\complex}}
 \newcommand\void[1]{}
\def\wht           {\ensuremath{\mathcal W_{H,\theta}}}
\def\Z             {{\ensuremath{\mathcal Z}}}
\def\zet           {{\ensuremath{\mathbb Z}}}

\documentclass[12pt]{article}
\usepackage{latexsym, amsmath, amsthm, amsfonts} 
\usepackage{enumerate, amssymb, xspace, xypic}
\usepackage[all]{xy}
\usepackage[mathscr]{eucal}
\usepackage{graphicx} \usepackage{rotating}
\usepackage{epstopdf,hyperref}
\usepackage{color}

\setlength\textwidth{17cm} \hoffset -20mm
\setlength\textheight{23.3cm} \topmargin= -21mm

\newtheorem{thm}{Theorem}

\newtheorem{lem}[thm]{Lemma}

\newtheorem{prop}[thm]{Proposition}

\theoremstyle{definition}
\newtheorem{Example}[thm]{Example}
\newtheorem{defi}[thm]{Definition}
\newtheorem{Definition}[thm]{Definition}
\newtheorem{rem}[thm]{Remark}

\newcommand\eqpic[4]{\begin{eqnarray}
                   \begin{picture}(#2,#3){}\end{picture}\nonumber\\
                   \raisebox{-#3pt}{ \begin{picture}(#2,#3) #4 \end{picture} }
                   \label{#1} \\~\nonumber \end{eqnarray} }
\newcommand\Eqpic[4]{\begin{eqnarray}
                   \begin{picture}(#2,#3){}\end{picture}\nonumber\\
                   \raisebox{-#3pt}{ \begin{picture}(#2,#3) #4 \end{picture} }
                   \nonumber \\[3pt]~\label{#1} \end{eqnarray} }
\newcommand\Includepic[1]   {{\begin{picture}(0,0)(0,0)
                            \scalebox{.32}{\includegraphics{imgs/fusV2_#1.eps}}\end{picture}}}

\begin{document}

\def\cir{\,{\circ}\,} 
\numberwithin{equation}{section}
\numberwithin{thm}{section}
                                     
\begin{flushright}
   {\sf ZMP-HH/13-13}\\
   {\sf Hamburger$\;$Beitr\"age$\;$zur$\;$Mathematik$\;$Nr.$\;$484}\\[2mm]
  July 2013
\end{flushright}
\vskip 3.5em
                                   
\begin{center}
\begin{tabular}c \Large\bf 
A geometric approach to boundaries and \\[3mm] \Large\bf 
surface defects in Dijkgraaf-Witten theories   
\end{tabular}\vskip 2.5em
                                     
  ~J\"urgen Fuchs\,$^{\,a}$,~
  ~Christoph Schweigert\,$^{\,b}$,~
  ~Alessandro Valentino\,$^{\,b}$

\vskip 9mm

  \it$^a$
  Teoretisk fysik, \ Karlstads Universitet\\
  Universitetsgatan 21, \ S\,--\,651\,88\, Karlstad \\[4pt]
  \it$^b$
  Fachbereich Mathematik, \ Universit\"at Hamburg\\
  Bereich Algebra und Zahlentheorie\\
  Bundesstra\ss e 55, \ D\,--\,20\,146\, Hamburg
                   
\end{center}
                     
\vskip 5.3em

\noindent{\sc Abstract}\\[3pt]
Dijkgraaf-Witten theories are extended three-dimensional topological
field theories of Turaev-Viro type. They can be constructed geometrically
from categories of bundles via linearization.
Boundaries and surface defects or interfaces in quantum
field theories are of interest in various applications and provide structural
insight. We perform a geometric study of boundary conditions and surface
defects in Dijkgraaf-Witten theories. A crucial tool is the linearization of
categories of relative bundles.
We present the categories of generalized Wilson lines produced by such a
linearization procedure. We establish that they agree with the
Wilson line categories that are predicted by
the general formalism for boundary conditions
and surface defects in three-dimensional topological field theories
that has been developed in \cite{fusV}.

\newpage

\section{Introduction}\label{intro}

For more than two decades, Dijkgraaf-Witten theories have provided a 
laboratory for new ideas in mathematical physics. 
They form a particularly tractable subclass of 
three-dimensional topological field theories. Since they have a Lagrangian
description in which path integrals reduce to counting measures, they
also serve as toy models for more complicated classes of topological
field theories like Chern-Simons theories.

The defining data of a Dijkgraaf-Witten theory are a finite
group $G$ and a 3-cocycle $\omega\iN Z^3(G,\complexx)$. Given these
data, a clear geometric construction \cite{free2,mort6} describes the theory
in terms of a linearization of categories of spans of $G$-bundles.
In the present paper we extend this approach by a geometric study of 
Dijkgraaf-Witten theories on manifolds with boundaries and defects.
More specifically, we consider the class of boundary conditions
and defects for three-dimensional topological field theories that was 
investigated in \cite{fusV}. Besides providing new structural insight, such 
boundary conditions and surface defects are relevant to various 
applications, ranging from a geometric visualization of the TFT approach to 
RCFT correlators to universality classes of 
gapped boundaries and defects in condensed matter systems 
that are of interest in many areas.

A crucial input in our construction are the concepts of relative manifolds 
and relative bundles. Via the linearization of relative bundles we obtain
categories of generalized Wilson lines for Dijkgraaf-Witten theories with
boundaries and defects. Our results perfectly match the general analysis of 
\cite{fusV}, combined with Ostrik's explicit description \cite{ostr5}
of module categories over the categories of $G$-graded vector spaces.

The rest of this paper is organized as follows.
In Section \ref{s:2} we collect pertinent background information. We 
start in Section \ref{ss2.1} with a summary of the geometric construction 
of Dijkgraaf-Witten theories, with emphasis on the implementation of 
locality, which naturally leads to the use of bicategories. We then present
some facts about relative bundles (Section \ref{ssec:mb}), about groupoid 
cohomology (Section \ref{ss:2.4}), and about module categories over the 
monoidal category $\GVect^\omega$ of $G$-graded vector spaces with 
associativity constraint twisted by the cocycle $\omega$ (Section \ref{ss:mod}).

Section \ref{s:3} contains our results for categories of generalized Wilson 
lines in Dijkgraaf-Witten theories with defects and boundaries. 
These categories are associated to one-dimensional manifolds with additional 
data. In the present paper, we restrict our attention to one-dimensional 
manifolds, leaving the case of two-dimensional manifolds with boundaries and of
three-dimensi\-o\-nal manifolds with corners to future work. (The results for
two- and three-dimensional manifolds will allow us to make statements about
generalized partition functions.)
In Section \ref{ss3.1} we discuss the relevant concepts of decorated 
one-dimensional manifolds and of categories of generalized bundles and use 
them to obtain the groupoids for the geometric situations of our interest. 
Afterwards we introduce in Section \ref{ss3.2} the additional data
from groupoid cohomology that are needed for the linearization process.
{}From the perspective of Lagrangian field theory, these data
are a topological bulk Lagrangian and compatible boundary terms; accordingly 
we refer to them as Lagrangian data.  In section \ref{ss3.3} we explain how 
to get 2-cocycles for the groupoids obtained in Section \ref{ss3.1} from 
Lagrangian data assigned to intervals and circles.

Invoking fusion of defects, all one-dimensional manifolds arising
from boundaries and defects can be reduced to two building
blocks: the interval without interior marked points, and the circle
with a single marked point. The linearization of the groupoids for
these two basic situations is described in detail in Section \ref{ss:int}
and \ref{ss:cir}, respectively. A convenient tool in these calculations is a 
graphical calculus for groupoid cocycles which is inspired by \cite{willS}. 
It is introduced in Section \ref{ss:gcc}. Another input is a concrete 
description of the transparent surface defect; this is obtained in Section 
\ref{ss:tpd}, based crucially on the invariance of the graphical calculus 
under Pachner moves.

In the considerations in Sections \ref{ss:int} and \ref{ss:cir} we concentrate 
on the situation that the relevant group homomorphisms are subgroup embeddings;
these lead to indecomposable module categories over $\GVect^\omega$.
Without this restriction, one obtains decomposable module categories;
this is discussed in the Appendix.

\section{Background material}\label{s:2}

In this section we summarize some background material on
the geometric construction of Dijkgraaf-Witten theories and on
boundaries and surface defects in three-dimensional topological field theories,
and on some aspects of relative bundles.

We fix the following conventions.
By $\Vect_\ko$ we denote the category of \findim\ vector spaces over a field 
\ko; In the present paper we only consider the case of complex vector spaces, 
$\ko \eq \complex$. 
A group is assumed to be finite. Manifolds, including manifolds with boundaries 
and manifolds with corners, are smooth.

For a finite group $G$ and a smooth manifold $X$ of any dimension, we denote 
by $\Bun_G(X)$ the category of smooth $G$-principal bundles, which has maps 
covering the identity as morphisms. We adopt the convention that the $G$-action 
on the fiber of a principal $G$-bundle is a right action.  In particular, a 
$G$-bundle over a point is just a right $G$-torsor. Morphisms of the category 
$\Bun_G(X)$ are morphisms of $G$-bundles covering the identity. They are all 
invertible, i.e.\ $\Bun_G(X)$ is a groupoid. Diffeomorphisms $f\colon X \To Y$ 
relate the groupoids by pullback functors, $f^*\colon \Bun_G(Y)\To \Bun_G(X)$. 
We note that with respect to e.g.\ surjective submersions, $\Bun_G$
becomes a stack on the category of smooth manifolds; we will not
use the language of stacks in this paper, though.

\subsection{The geometric construction of Dijkgraaf-Witten theories}\label{ss2.1}

A classic definition by Atiyah characterizes $d$-dimensional
topological field theories as symmetric monoidal functors from a geometric 
category, the symmetric monoidal category $\cobord_{d,d-1}$ of $d$-dimensional
cobordisms, to some linear category, e.g.\ to the symmetric monoidal category 
\Vectc. A classic result states that for $d \eq 2$ the functor given by
  \be
  \tft \,\longmapsto\, \tft(S^1)
  \ee
is an equivalence between the category of topological field
theories and the category of complex commutative Frobenius algebras.

Dijkgraaf-Witten theories are three-dimensional topological field theories.
The Dijkgraaf-Wit\-ten theory
  \be
  \tft_G :\quad \cobord_{3,2} \,\to\, \Vectc
  \ee
based on a finite group $G$ can be characterized as follows. The functor
$\tft_G$ associates to a closed oriented surface $\Sigma$ the
vector space $\tft_G(\Sigma)$ freely generated by the set of isomorphism 
classes of principal $G$-bundles on $\Sigma$.  To a cobordism
  \be
  \xymatrix @R-8pt {
  & M & \\ \Sigma\ar[ur]&&\ar[ul]\Sigma'
  } \ee
it associates a linear map $\tft_G(\Sigma)\To \tft_G(\Sigma')$ whose 
matrix element for principal $G$-bundles $P$ on $\Sigma$ and $P'$ on 
$\Sigma'$ is the number $|\Bun_G(M,P,P')|$. Here $\Bun_G(M,P,P')$ is 
the groupoid of $G$-bundles on $M$ that restrict to a given $G$-bundle $P$ 
on $\Sigma$ and to $P'$ on $\Sigma'$, and  for any groupoid  $\Gamma$ 
we denote by $|\Gamma|$ the groupoid cardinality, which is
the rational number
  \be
  |\Gamma|:= \sum_{\gamma\in \pi_0^{}(\Gamma)} \frac1{|{\rm Aut}(\gamma)|}
  \ee
obtained by summing over the set $\pi_0(\Gamma)$ of isomorphism classes
of objects of $\Gamma$.
 
The introduction of $d{-}1$-dimensional manifolds can be seen as a first
step towards implementing locality in topological field theories:
These submanifolds can be used to cut the $d$-dimensional manifold into 
smaller and simpler pieces, which are manifolds with boundary.
The boundaries of cobordisms are thus to be thought of as ``cut-and-paste
boundaries''. They must not be mixed up with physical boundaries
to be discussed in section \ref{sec:b+defTFT}.

Our analysis uses a framework which goes one step further in the
implementation of locality and naturally leads to the use of bicategories.
We need the following concepts:

\begin{Definition} ~
\\[2pt]
(i)~
The bicategory 2-\Vectc\ of complex 2-vector spaces is the bicategory of
\complex-linear finitely semisimple abelian categories. 
\\
The Deligne product of \complex-linear categories endows this
bicategory with the structure of a symmetric monoidal bicategory.
\\[2pt]
(ii)~
The symmetric monoidal category $\cobord_{3,2,1}$ has as objects
compact oriented smooth \onedim\ manifolds. 1-morphisms are two-dimensional
manifolds with boundary; 2-morphisms are three-manifolds with corners,
up to diffeomorphisms preserving the orientation and the boundary.
(For brevity we suppress collars in our discussion.)
\\[2pt]
(iii)~
An extended three-dimensional topological field theory is a symmetric 
monoidal functor
  \be
  \tft:\quad \cobord_{3,2,1} \,\to\, 2\mbox{-}\Vectc \,.
  \ee
\end{Definition}

We note that, as a consequence of the axioms, 
  \be
  \tft(S\,{\sqcup}\, S') \,\cong\, \tft(S)\boxtimes \tft(S')
  \ee
for any pair $(S,S')$ of one-dimensional manifolds, and
$\tft(\emptyset) \eq \Vectc$, where $\emptyset$ is considered
as a one-dimensional manifold and monoidal unit of $\cobord_{3,2,1}$.

The Dijkgraaf-Witten theory based on a finite group $G$ is
in fact an extended topological field theory \cite{free2,mort6}. 
It assigns to a one-dimensional manifold $S$ the category
  \be
  \tft_G(S) := [ \Bun_G(S),\Vectc ] \,.
  \ee
Here by $[\,\C_1 \,,\C_2 \,]$ we denote the category of
functors between two (essentially small) categories $\C_1$ and $\C_2$.

This formula already gives a hint on the general construction of
the theory: In a first step, one uses the functor $\Bun_G$ 
that associates to a smooth manifold the groupoid of $G$-bundles
to construct a bifunctor
  \be
  \xymatrix{
  \cobord_{3,2,1} \ar[rr]^{\tBun_G} &&
  \mathrm{SpanGrp}
  } \ee
to a bicategory of spans of groupoids. In a second step one
linearizes by taking functor categories with values in \Vectc,
  \be
  \xymatrix{\tft_G:\quad
  \cobord_{3,2,1} \ar[rr]^{~~~~~\tBun_G} &&
  \,\mathrm{SpanGrp}\, \ar[rr]^{[-,\Vectc]~} && \,2\mbox{-}\Vectc \,.
  }
  \ee
The non-extended topological field theory can be obtained from this extended 
topological field theory by restricting to the endomorphism categories of the 
monoidal units of $\cobord_{3,2,1}$ and $2\mbox{-}\Vectc$, since
$\End_{\cobord_{3,2,1}}(\emptyset) \,{\cong}\, \cobord_{3,2}$ and
$\End_{2\mbox{\footnotesize-}\Vectc}(\Vectc) \,{\cong}\, \Vectc$.
 
The fact that $\tft_G$ involves pure counting measures amounts to considering 
vanishing Lagrangians. Dijkgraaf and Witten \cite{diwi2} introduced
the following generalization, in which the linearization is only projective.
Select a cocycle $\omega$ representing a class $[\omega]\iN H^3(G,\complexx)$ 
in group cohomology. One may think about this class as a 2-gerbe \cite{willS}
on the classifying space $BG$
of $G$-bundles, which we represent by the action 
groupoid $* \srr G$ of $G$ acting on a single object $*$.
A $G$-bundle on a 3-manifold $M$ corresponds to a map into this classifying 
space. Pulling back the 2-gerbe along this map to $M$ we get a 2-gerbe on
$M$, which for dimensional reasons is trivial. It therefore gives rise to
a 3-manifold holonomy, which should be seen as the value of a topological 
Lagrangian. For this reason, we refer to the cocycle $\omega$ (and later 
on to similar quantities) as a Lagrangian datum.

The second step of the construction of Dijkgraaf-Witten models consists of 
a linearization of the groupoids obtained in the first step. In general, 
such a linearization is only projective. The relevant 2-cocycle on the 
groupoids must be derived from the Lagrangian data. In the case at hand, 
the 3-cocycle $\omega$ can be transgressed \cite{willS} to a cocycle
$\tau(\omega)$ representing a class in $H^2(G \srr G,\complexx)$, the
groupoid cohomology for the action groupoid $G \srrad G$ with $G$
acting on itself by the adjoint action.

Direct calculation now yields \cite{mort6} 
$\tft_G(S^1) \eq \cald^\omega(G)\Mod$, i.e.\ the category associated to the 
circle is the modular tensor category of modules over the twisted Drinfeld 
double \cite{dipR} of the category of $G$-graded vector spaces -- or, 
equivalently, of complex representations of the finite group $G$.
This category is the category of bulk Wilson lines. The goal of the present 
paper is to generalize this construction to more general cobordism categories 
and to consistently obtain categories of generalized Wilson lines:
both bulk and boundary Wilson lines. Our construction requires the use of 
more general categories of bundles on smooth manifolds.

\subsection{Boundaries and defects in three-dimensional TFT}\label{sec:b+defTFT}

The structure of boundary conditions in two-dimensional topological field 
theories is well understood \cite{laPf,moSe} in the framework of open/closed 
topological field theories.  In this setting one considers a larger cobordism
category $\cobord_{2,1}\opcl$. Its objects are one-dimensional smooth 
manifolds with boundary, with a suitable boundary condition fixed for each 
connected component of the (physical) boundary. Morphisms are now cobordisms 
with boundary, with each boundary component partitioned into segments each of 
which is either a physical boundary or a cut-and-paste boundary. 
An open/closed topological field theory is then a symmetric monoidal 
functor $ \cobord_{2,1}\opcl \To \Vectc$. It turns out that a boundary 
condition $a$ gives rise to a (not necessarily commutative) Frobenius algebra 
$W_a$ whose center is the commutative Frobenius algebra $\tft(S^1)$. Explicitly,
a boundary condition is thus a pair consisting of a Frobenius algebra
$W_a$ and an isomorphism 
  \be
  \tft(S^1) \stackrel\cong\longrightarrow Z(W_a)
  \ee
of commutative associative algebras.
Once such a Frobenius algebra $W_a$ has been determined, the category
of boundary conditions can be described as the category $W_a\Mod$. 

We pause for two comments. First, we allow for point-like insertions on
boundaries that separate possibly different boundary conditions. As a 
consequence, boundary conditions form a category rather than a set: The space
$\Hom_{W_a\Modf}(M_c,M_d)$ of morphisms between two boundary conditions
$M_c,M_d\iN W_a\Mod$ is the vector space of labels for insertions that 
separate the boundary condition $M_c$ from the boundary condition $M_d$.
Second, distinguishing one boundary condition in the discussion could be 
avoided, but at the price of using a higher-categorical language: the one
of module categories over \Vectc. For the three-dimensional topological field 
theories we are interested in, a Morita invariant treatment would amount to 
working with three-categories; we prefer an approach that avoids this. For 
a more detailed analysis of two-dimensional open/closed topological field 
theories we refer to the literature, in particular to \cite{laPf}.

Once one allows for manifolds with boundary, codimension-one
defects that partition a manifold into cells supporting possibly
different topological field theories are a natural extension of the picture 
described above. For two-dimensional theories such defects provide a lot of 
additional insight, in particular about symmetries and dualities \cite{ffrs5}.

\medskip

In three-dimensional topological field theories, boundary conditions and defects 
have been studied only recently. In this case, codimension-one defects are
surface defects. Boundaries and surface defects in three-dimensional 
topological field theories of Reshetikhin-Turaev type appear in a geometric 
interpretation \cite{kaSau3} of the TFT approach \cite{scfr2} to RCFT 
correlators and as models for universality classes of gapped boundaries and 
gapped interfaces for topological phases 
(see e.g.\ \cite{kiKon,waWe,levi4,bajQ4,kapus9}), which arise for instance 
in the study of $2{+}1$-dimensional electron fluids,
including certain fractional quantum Hall states.

A model-independent study of boundary conditions and surface
defects in such theories \cite{fusV} yields the following results, which 
can be regarded as a categorified version of the results in two dimensions
described above. To any boundary condition $a$ there is associated a fusion 
category $\calw_a$. It describes boundary Wilson lines, i.e.\ Wilson lines that 
are confined to the boundary with boundary condition $a$. Let us recall that, 
depending on the chosen formalism, Wilson lines are embedded ribbons or tubes 
with a marked line at the boundary of the tube. In a similar spirit, boundary 
Wilson lines should be described by half-tubes extending into the 
three-dimensional bulk, as illustrated by the following picture:
  \eqpic{Figure13+Figure14}{420}{69} {
  \put(-15,0) {\Includepic{13}
  }
  \put(205,0) {\Includepic{14}
  \put(45,82)   {$ a $}
  \put(63,120)  {$ d $}
  \put(141,123) {$ d' $}
  \put(149,82)  {$ a' $}
  } }
Here the figure on the right shows a boundary Wilson line in the form of a 
half-tube separating two (possibly different) boundary conditions $a$ and 
$a'$ and at which two surface defects $d$ and $d'$ end, while the
left figure shows a bulk Wilson line in the form of a tube at which four 
surface defects end.

Since boundary Wilson lines are objects in a two-dimensional theory, the 
category $\calw_a$ is not braided. A boundary condition can now be defined as 
a pair consisting of a fusion category $\calw_a$ and a braided equivalence
  \be
  \calc = \tft(S^1) \stackrel\simeq\longrightarrow \Z(\calw_a) \,,
  \labl{Wtriv}
where $\Z$ denotes the Drinfeld center of the fusion category $\calw_a$, 
which is a braided monoidal category. We refer to an equivalence of the type 
\erf{Wtriv} as a \emph{Witt trivialization} of \C. One should note that not 
any braided category is equivalent to a Drinfeld center. In general 
three-dimensional topological field theories this is a source of
obstructions. But in the case of Dijkgraaf-Witten theories the relevant modular 
tensor category $\calc$ indeed is a Drinfeld double, namely the Drinfeld double 
of the fusion category $\GVect^\omega$ of $G$-graded vector spaces 
with associator twisted by $\omega$ (see Section \ref{ss:mod})
  \be
  \calc = \Z(\GVect^\omega)\,.
  \ee
As a consequence, in the case of our interest the existence of boundary 
conditions is not obstructed. 

The collection of all boundary conditions now has the structure
of a bicategory: the bicategory of all module categories over the fusion 
category $\calw_a$. (Module categories over a fusion category are a 
categorification of the notion of a module over a ring; we refer to \cite{ostr} 
for details.) The category of boundary Wilson lines separating two boundary 
conditions $c$ and $d$ that are given by two $\calw_a$-module categories 
$\calm_c$ and $\calm_d$, respectively, is the abelian $\complex$-linear category
  \be
  \Fun_{\calw_a\Modf}(\calm_c,\calm_d)
  \ee
of $\calw_a$-module functors.

A similar analysis can be carried out for surface defects that separate
two topological field theories of Reshetikhin-Turaev type, which are labeled
by modular tensor categories $\calc_1$ and $\calc_2$. The category
of Wilson lines in a surface defect of type $d$ is now a fusion
category $\calw_d$ together with a braided equivalence
  \be
  \calc_1^{}\boxtimes\calc_2^\rev \stackrel\simeq\longrightarrow
  \Z(\calw_d) \,.
  \ee
Since the modular categories relevant for Dijkgraaf-Witten theories are 
already Drinfeld centers themselves, the existence of surface defects between 
any two Dijkgraaf-Witten theories is not obstructed.
The category of Wilson lines separating surface defects that are given by
two $\calw_d$-module categories $\calm_c$ and $\calm_d$, respectively,
is the abelian \complex-linear category
  \be
  \Fun_{\calw_a\Modf}(\calm_c,\calm_d)
  \ee
of $\calw_d$-module functors.

In the special case of defects separating a modular tensor category \C\
from itself, we can work with the canonical Witt trivialization
  \be
  \mathrm{can}:\quad \C\boti\C^\rev_{} \stackrel\simeq\longrightarrow
  \Z(\calc) \,.
  \labl{Wittcan}
This functor maps the object $U\boti V \iN \C\boti\C^\rev_{}$ to the object 
$U\oti V\iN\C$ endowed with a half braiding $e_{U\otimes V}^{}$ given by
\cite[Eq.\,(4.2)]{etno2}
  \be
  e_{U\otimes V}^{}(X): \quad U\oti V\oti X \stackrel{c^{-1}}\longrightarrow 
  U\oti X\oti V \stackrel{c}\longrightarrow X\oti U\oti V .
  \ee
With respect to the canonical Witt trivialization \erf{Wittcan}, we describe 
a defect separating \C\ from itself by a \C-module category. Now \C\ has a
natural structure of module category over itself. This specific \C-module 
category describes a particularly important surface defect, the 
\emph{transparent} (or \emph{invisible}) surface defect. In fact, one expects 
a notion of a fusion product of defects, so that the bicategory of surface 
defects is even a monoidal bicategory. The transparent defect is then the 
tensor unit of the monoidal bicategory of defects. 
(At one step lower in the categorical ladder,
the tensor unit of the monoidal category of 
endofunctors of any given defect category describes a Wilson line that is 
invisible inside the surface. The category of endofunctors of \C\ describes 
Wilson lines inside the transparent defect; these are ordinary bulk Wilson 
lines. In particular, the tensor unit of this monoidal
category is the invisible bulk Wilson line.)

\medskip

Our goal in this paper is to achieve a concrete geometric, Lagrangian
construction of some of the categories describing Wilson lines in the 
presence of boundaries and surface defects in Dijkgraaf-Witten theories
in the spirit of \cite{free2,mort6}.
To this end, we need the appropriate geometric objects that
form categories whose linearizations enter in the topological field theory.

\subsection{Relative bundles}\label{ssec:mb}

In this section we review the notion of a relative bundle. We restrict our 
attention to finite groups, which is sufficient for our construction.

\begin{Definition} ~\\[1pt]
Let $G$ and $H$ be finite groups, $\iota\colon H\To G$ a morphism of 
finite groups, and $X$ a smooth manifold, Then the functor
  \be
  \Ind_\iota:\quad \Bun_H(X) \to \Bun_G(X)
  \ee
is the one that acts on objects as $P_H \,{\mapsto}\, P_H\,{\times_{\!H}}\, G$.
\end{Definition}

\begin{rem}\label{rem2.3}~\\[2pt]
(i)~
If the group homomorphism $\iota$ injective, then the functor $\Ind_\iota$ is 
injective on morphisms. 
\\[2pt]
Indeed, suppose $f_1,f_2\colon P_H\To P_H'$ are two different morphisms
of $H$-bundles on $X$. Then there exist points $x\iN X$ and
$p$ in the fiber of $P_H$ over $x$ such that
$f_1(p) \,{\ne}\, f_2(p)$. Since both $f_1(p)$ and $f_2(p)$ are in the fiber
of $P_H'$ over $x$, we have a unique $h\iN H\,{\setminus}\,\{e\}$
such that $f_1(p) \eq f_2(p).h$. Suppose that after induction
$[f_1(p),g] \eq [f_2(p),g]$ for some $g\iN G$. Then
  \be
  [f_1(p),g] = [f_2(p),g] = [f_1(p).h,g] = [f_1(p),\iota(h)\cdot g] \,.
  \ee
Equality of the left and right hand sides implies $\iota(h)\cdot g \eq g$, i.e.\
$\iota(h)\eq e$. If $\iota$ is injective, this is impossible for $h\,{\ne}\, e$.
\\[2pt]
(ii)~
Induction commutes with pullback: if $f\colon X_1\To X_2$ is a morphism of
smooth manifolds and if $P_H^{(2)}$ is a $H$-bundle on $X_2$, then
  \be
  \Ind_\iota\, f^*P_H^{(2)} = f^*\, \Ind_\iota P_H^{(2)} .
  \ee
More abstractly, for any finite group $G$ we have the stack $\Bun_G(-)$ of 
$G$-bundles on the category of smooth manifolds with topology given by 
surjective submersions. Induction is also compatible with descent. Thus
$\Ind_\iota$ gives a morphism $\Ind_\iota\colon \Bun_H\To\Bun_G$ of stacks.
\end{rem}

A crucial ingredient for our construction is the notion of relative smooth 
manifolds and relative bundles.  This is as follows, see e.g.\ \cite{STee}.
 
\begin{Definition}~\\[2pt]
(i)~
A \emph{relative} (smooth) \emph{manifold} $Y\,{\stackrel j\to}\, X$ 
consists of a pair $Y,X$ of 
smooth manifolds and a morphism $j\colon Y \To X $ of smooth manifolds. 
 \\
A morphism $(Y_1 {\stackrel{j_1^{}}\to} X_1)\,{\longrightarrow}\, 
(Y_2{\stackrel{ j_2^{}}\to} X_2)$ of  relative
smooth manifolds is a commuting diagram
  \be
  \xymatrix{
  Y_1\ar_{j_1^{}}[d]\ar^{f_Y^{}}[r] & Y_2\ar^{j_2^{}}[d]\\
  X_1\ar_{f_X^{}}[r]&X_2
  } \ee
in the category of smooth manifolds.
\\[2pt]
(ii)~
Let $\iota\colon H \To G$ be a homomorphism of finite groups. A \emph{relative 
$(G,H)$-bundle} on the relative manifold $Y\,{\stackrel j\to}\, X$ 
is a triple consisting of a $G$-bundle $P_G$ on $X$,
an $H$-bundle $P_H$ on $Y$, and an isomorphism
  \be
  \alpha:\quad \Ind_\iota(P_H) \stackrel\simeq\longrightarrow j^*(P_G)
  \ee
of $G$-bundles on $Y$.
\\[2pt]
(iii)~
A morphism $(P_G,P_H,\alpha)\To (P'_G,P'_H,\alpha')$ of relative 
$(G,H)$-bundles on a relative smooth manifold $Y\,{\stackrel j\to}\, X$
consists of a morphism
  \be
  \varphi_G^{}:\quad P_G \to P_G'
  \ee
of $G$-bundles on $X$ and of a morphism
  \be
  \varphi_H^{}:\quad P_H \to P_H'
  \ee
of $H$-bundles on $Y$ such that the diagram 
  \be
  \xymatrix  @R+8pt {
  \Ind_\iota(P_H)\ar^{~~~\alpha}[r]\ar_{\Ind_\iota\varphi_H^{}}[d]
  & j^*P_G\ar^{j^*\varphi_G^{}}[d] \\
  \Ind_\iota(P_H')\ar^{~~~\alpha'}[r] & j^*P'_G
  }
  \label{morphdia}\ee
of morphisms of $G$-bundles on $Y$ commutes. 
\\[2pt]
The category of relative $(G,H)$-bundles on $(X,Y)$ is denoted
by $\Bun_{(G,H)}(Y{\to} X)$.
\end{Definition}

\begin{rem}~\\[2pt]
(i)~
The category $\Bun_{(G,H)}(Y{\to} X)$ depends the group homomorphism
$\iota\colon H\To G$. The notation $\Bun_{(G,H)}(Y{\to} X)$ suppresses 
this dependence and is thus slightly inappropriate. 
\\[2pt]
(ii)~
The category $\Bun_{(G,H)}(Y{\to} X)$ inherits from the category
of principal bundles the property of being a groupoid:
all morphisms of relative bundles are invertible.
\\[2pt]
(iii)~
For the special case that $j\eq \id_X$ is the identity on $X \eq Y$, we
obtain the notion of a reduction of a $G$-bundle to an $H$-bundle 
along the group homomorphism $\iota$.
\\[2pt]
(iv)~
As an object, a relative bundle is thus a $G$-bundle $P_G$ on $X$ together 
with a  reduction of its pullback $j^*P_G$ to an $H$-bundle along the group 
homomorphism $\iota$. One should note, however, that the {\em morphisms} 
in $\Bun_{G,H}(X,Y)$ are \emph{not} simply morphisms of reductions, which 
would only involve a morphism of $G$-bundles on the manifold $Y$. Rather, 
also a $G$-morphism on the manifold $X$ is required. (Later on, $Y$ will 
typically be a submanifold of $X$; hence we require a morphism on a larger 
manifold in that case.) In gauge theory terminology, the morphisms are 
thus gauge transformations on $Y$ and on $X$, respectively.
\\[2pt]
(v)~
If the group homomorphism $\iota$ is injective, then by remark \ref{rem2.3}(i)
the morphism $\varphi_H$ of $H$-bundles is determined uniquely by $\varphi_G$, 
provided it exists. It is thus not an extra datum. The morphisms of relative 
$(G,H)$-bundles are in this situation morphisms of $G$-bundles 
that are compatible with the reductions.
\\[2pt]
(vi)~
Fix a homomorphism $\iota\colon H\To G$ of finite groups and consider a 
relative bundle $(P_G^2,P_H^2,\alpha^2)$ on the relative manifold 
$Y_2\,{\stackrel{j_2^{}}\to}\, X_2$. 
We define a pullback of relative bundles along the morphism
  \be
  \xymatrix{
  Y_1\ar_{j_1^{}}[d]\ar^{f_Y^{}}[r] & Y_2\ar^{j_2^{}}[d]\\
  X_1\ar_{f_X^{}}[r]&X_2
  }
  \ee
of relative manifolds. Since induction and pullback commute 
by remark \ref{rem2.3}(ii), we have a canonical isomorphism 
  \be
  \Ind_\iota(f_Y^* P_H^2)\cong f_Y^*\, \Ind_\iota P_H^2
  \ee
of bundles. Noting that $f_X\cir j_1 \eq j_2\cir f_Y$, we also have another 
isomorphism 
  \be
  j_1^*f_X^* P_G^2 \cong f_Y^*j_2^* P_G^2
  \ee
of $G$-bundles, and thus an isomorphism
  \be
  f_Y^*(\alpha):\quad \, \Ind_\iota(f_Y^* P_H^2) \to 
  f_Y^*\Ind_\iota P_H^2 \to f_Y^*f_2^*P_G^2 \to j_1^*f_X^* P_G^2 
  \ee
of $G$-bundles on $Y_1$. Hence $(f_X^*P^2_G,f_Y^*P_H^2, f_Y^*(\alpha))$ is a 
relative $(G,H)$-bundle on $(X_1,Y_1)$.
\\
We have thus a bifunctor $\Bun_{\iota:\, H\to G}$ from the category opposite
to the category of relative manifolds to the bicategory of groupoids, 
i.e.\ a prestack $\Bun_{(G,H)}$ on the category of relative manifolds.
\end{rem}

It should be appreciated that we do not require the group homomorphism 
$\iota\colon H \To G$ to be injective.  For later use, we will consider 
two examples.

\begin{Example}\label{ex1} ~\\[2pt]
Consider the case that $X\eq Y$ is a point. Bundles are then torsors 
$\underline H$ and $\underline G$, respectively, which are unique up to
isomorphism. The additional datum characterizing a relative bundle
is then an isomorphism 
  \be
  \alpha:\quad \underline H\times_{\!H} G \stackrel \cong\longrightarrow
  \underline G
  \ee
of torsors. If we fix base points $*_{\!H}\iN\underline H$ and
$*_G\iN\underline G$, then $\alpha$ is determined by the group element
$\gamma_\alpha\iN G$ such that $\alpha([*_{\!H},e]) \eq *_G\,.\gamma_\alpha$. 
\\[2pt]
Morphisms $(\underline G,\underline H,\alpha) \To (\underline G',
\underline H',\alpha')$ are pairs of morphisms 
$\varphi_H^{}\colon \underline H\to \underline H'$ and 
$\varphi_G^{}\colon \underline G\to \underline G'$ of torsors. Using the base 
points $*_{\!H}^{}$ and $*_{\!H}'$ of $\underline H$ and $\underline H'$,
respectively, and similarly base points of the $G$-torsors,
morphisms are described by group elements $g\iN G$ and $h\iN H$ such that
  \be
  \varphi_H^{}(*_{\!H}ее{})= *_{\!H}'\,.\, h \qquad\text{and}\qquad
  \varphi_G^{}(*_G^{})= *_{G}'\,.\, g \,.
  \ee
The commuting diagram (\ref{morphdia}) requires that
  \be
  \varphi_G^{}(\alpha[*_{\!H}^{},e]) = \varphi_G^{}(*_G^{}.\gamma_\alpha)
  = *_{G}'\,.\,(g\gamma_\alpha)
  \ee
equals
  \be
  \alpha'(\Ind_\iota\varphi_H^{}([*_{\!H}^{},e]) = \alpha'([*_{\!H}'h,e])
  = \alpha'([*_{H}',\iota(h)]) = *_{G}'\,.\,(\gamma_{\alpha'}\iota(h)) \,.
  \ee
We thus find the condition 
  \be
  g\, \gamma_\alpha = \gamma_{\alpha'}\, \iota(h)
  \ee
on the pair $(g,h)$ of group elements. As expected, for $\iota$ injective, 
this determines $h$ in terms of $g$. Moreover, given any two relative bundles, 
we can always find group elements $g$ and $h$ such that this relation holds. 
So there is a single isomorphism class of objects. In particular,
we can restrict our attention to just one $H$-torsor $\underline H$ and
one $G$-torsor $\underline G$. Then we get a category with objects labeled 
by $\gamma_\alpha\iN G$ and morphisms being pairs $(g,h)$ such that 
$g\gamma_\alpha \eq \gamma_{\alpha'} \iota(h)$, 
or put differently, the action groupoid
  \be
  G \sll G \srr_{\iota^-} H \,.
  \ee
Here the notation is as follows. We deal with left actions for both $G$ and 
$H$. The left action of the group $G$ is simply left multiplication, while 
the left action of $H$ is right multiplication after applying the group 
homomorphism $\iota$ and taking the inverse, i.e.\ 
$(g,h).\gamma= g\cdot\gamma\cdot \iota(h)^{-1}$.
\end{Example}

\begin{Example}\label{ex1b} ~\\[1pt]
Take for $X$ a closed interval and for $Y$ the subset consisting of its two 
end points, which we label by $1,2$. Since the interval is contractible
and $G$ is finite, the category of $G$-bundles on $X$
is canonically equivalent to the category of $G$-torsors. Similarly
we have $H_1$- and $H_2$-torsors, one over each end point. We fix one such
torsor for each end point and for the interval itself from now on. We also fix 
base points $*_{\!H_1^{}}$, $*_{\!H_2^{}}$ and $*_G$ for these torsors.
Objects in the category are then pairs $(\gamma_{\alpha,1},\gamma_{\alpha,2}) 
\iN G\Times G$ which describe the morphisms of torsors as
  \be
  \alpha_1([*_{H_1^{}},e]) = *_G . \gamma_{\alpha,1} \qquad \text{and}
  \qquad \alpha_2([*_{H_2^{}},e]) = *_G^{} . \gamma_{\alpha,2} \,.
  \ee
The morphisms are described by triples $(h_1,h_2,g)\iN H_1\Times H_2 \Times G$
satisfying
  \be
  \varphi_{H_1}^{}(*_{\!H_1}) = *_{\!H_1}\,.\, h_1 \,, \quad
  \varphi_{H_2}^{}(*_{\!H_2}) = *_{\!H_2}\,.\, h_2 \quad\text{ and }\quad
  \varphi_G(*_G)^{} = *_G\,.\,g \,.
  \ee
Based on the commuting diagram (\ref{morphdia}), we check when a triple 
$(h_1,h_2,g)$ gives a morphism $(\gamma_{\alpha,1}^{},\gamma_{\alpha,2}^{}) 
\To (\gamma'_{\alpha,1},\gamma'_{\alpha,2})$.  As before we compute 
  \be
  \varphi_G^{}(\alpha_i[*_{\!H_i},e]) = \varphi_G^{}(*_G\gamma_{\alpha,i})
  = *_G\,.\,(g\gamma_{\alpha,i})
  \ee
and
  \be
  \alpha_i'(\Ind_\iota\varphi_H^{}([*_{\!H_i},e]) = \alpha_i'([*_{\!H_i}h_i,e])
  = \alpha'([*_{\!H_i},\iota(h_i)]) = *_G^{}\,.\,(\gamma'_{\alpha, i}.\iota(h_i)) \,.
  \ee
We thus arrive at the equalities
  \be
  g\, \gamma_{\alpha,i} = \gamma'_{\alpha,i}\, \iota(h_i) 
  \ee
for $i\eq 1,2$. Hence the action groupoid is
  \be
  G \sll\, G\Times G \srr_{\!\iota_1^-\times\iota_2^-} H_1 \Times H_2 \,,
  \ee
where the $G$-action is the diagonal one.
\end{Example}

\subsection{Groupoid cohomology and gerbes on groupoids}\label{ss:2.4}

The definition of a Dijkgraaf-Witten theory on a three-manifold requires, 
as an additional datum besides a finite group $G$, the choice of a
3-cocycle $\omega\iN Z(G,\complexx)$. This cocycle enters in the linearization.
We now describe how this 3-cocycle can be seen geometrically as a 2-gerbe on 
the groupoid $* \srr G$.

\medskip

We first give a brief outline of groupoid cohomology. Given a  finite groupoid 
$\Gamma=(\Gamma_0,\Gamma_1)$, consider its nerve, which is a simplicial set
  \be
  \Big(\xymatrix{ {\cdots}~
  \ar@<1.3ex>[r]^{\partial_0} \ar@<0.3ex>[r] \ar@<-0.7ex>[r]
  \ar@<-1.7ex>[r]_{\partial_3}
  & \,{\Gamma_2}\,
  \ar@<0.9ex>[r]^{\partial_0} \ar@<-0.1ex>[r] \ar@<-1.1ex>[r]_{\partial_2}
  & \,{\Gamma_1}\,
  \ar@<0.3ex>[r]^{\partial_0} \ar@<-0.7ex>[r]_{\partial_1}
  &
  \,{\Gamma_0} }\Big)=:\Gamma_{\!\bullet} \, ,
  \ee
where for $i \,{\ge}\, 1$, $\Gamma_i$ consists of $i$-tuples of composable
morphisms of $\Gamma$. Applying the functor $\Map(-,\complexx)$ 
and taking alternating combinations of the face maps yields a complex 
  \be
  \Map(\Gamma_0,\complexx) \to \Map(\Gamma_1,\complexx) \to
  \Map(\Gamma_2,\complexx) \to \Map(\Gamma_3,\complexx) \to \,\cdots 
  \labl{Gammacomplex}
of groups. A group $G$ gives rise to the groupoid $*\srr G$ with a single
object. In this case the complex \erf{Gammacomplex}
reduces to the standard bar complex.

It is useful to think about cochains in this complex in a geometric way.

\begin{Definition}\mbox{} \\[1pt]
An $n$-\emph{gerbe} on the groupoid $\Gamma$ is an $(n{+}1)$-cocycle
  \be
  \omega \in Z^{n+1}(\Gamma,\complexx) \,.
  \ee
\end{Definition}

Using standard facts about complexes in small abelian categories one 
deduces that $n$-gerbes on a groupoid $\Gamma$ form an $n{+}1$-category:
 \beginitemize
\item
A $(-1)$-gerbe is an object in degree 0, i.e.\ an
element of the set of objects of $\Gamma$.

\item
A $0$-gerbe consists of a 1-cocycle $\omega\iN Z^1(\Gamma)$.
The morphism sets are
  \be
  \Hom(\omega,\omega')
  = \{\eta\iN \Gamma_0\,|\, \mathrm d\eta= \omega'-\omega\} \,.
  \ee
We thus get a category of 0-gerbes, which we also call
line bundles on $\Gamma$. Its isomorphism classes are
classified by the cohomology group $H^1(\Gamma,\complexx)$. 

\item
$1$-gerbes form a bicategory. Its objects are 2-cocycles, and the set 
of 1-morphisms between two 2-cocycles $\omega$ and $\omega'$ is 
$\{\eta\iN \Gamma_1\,|\, \mathrm d\eta \eq \omega'\,{-}\,\omega\}$.
Given two 1-morphisms $\eta,\eta'\colon \omega\To\omega'$, a 2-morphism 
$\Phi\colon \eta \,{\Rightarrow}\, \eta'$ is an element $\Phi\iN \Gamma_0$ 
satisfying $\mathrm d\Phi \eq \eta' \,{-}\, \eta$.
\\
The isomorphism classes of this bicategory of gerbes are
classified by the cohomology group $H^2(\Gamma,\complexx)$. 
 \end{itemize}

For Dijkgraaf-Witten theories based on a finite group $G$, 2-gerbes
on the groupoid $*\srr G$ are relevant. As we already have pointed out,
they should be thought of as a finite version of a Chern-Simons 2-gerbe.

\subsection{Module categories over the fusion category 
           {\boldmath $\GVect^\omega$}}\label{ss:mod}

We next discuss category-theoretic and algebraic realizations of group 
3-cocycles. A closed $3$-cocycle $\omega$ on a finite group $G$ allows one 
to endow the abelian category $\GVect$ of $G$-graded vector spaces with a 
non-trivial associativity constraint, defined on simple objects by 
  \be
  \begin{array}{rll}
  a_{V_{g_1},V_{g_2},V_{g_3}}:\quad
  \left(V_{g_1}\oti V_{g_2}\right) \otimes V_{g_3}
  &\!\!\to\!\!& V_{g_1}\otimes \left( V_{g_2} \oti V_{g_3}\right)
  \Nxl2
  v_1\otimes v_2\otimes v_3 &\!\!\mapsto\!\!& \omega(g_1,g_2,g_3)\,
  v_1\otimes v_2\otimes v_3 \,.
  \end{array}
  \labl{aomega}
This yields a fusion category, which is denoted by $\GVect^\omega$
(the pentagon axiom is fulfilled because $\omega$ is closed).
Cohomologous 3-cocycles give rise to monoidally equivalent fusion categories.

The modular tensor category relevant for the Dijkgraaf-Witten theory based on 
$(G,\omega)$ is the Drinfeld center $\Z(\GVect^\omega)$. (This has been discussed 
in \cite{dipR}; a helpful more recent exposition is given in \cite{willS}.)
It is thus a topological field theory of Reshetikhin-Turaev type. This
allows us to compare our geometric results with those obtained in the
model independent approach to defects and boundary conditions in \cite{fusV}.     

\medskip
                                 
The indecomposable module categories over the monoidal category
$\GVect^\omega$ have been classified \cite[Example\,2.1]{ostr5}:
Consider a subgroup $H\le G$ and a 2-cochain $\theta$ on $H$ such that 
$\mathrm d \theta \eq \omega|_H$. Note that this requires the restriction 
of $\omega$ to the subgroup $H$ to be exact and thus imposes in general
restrictions on the subgroup. Rephrased in the language of Section \ref{ss:2.4}, 
$\theta$ is a 1-mor\-phism from the trivial 2-gerbe on $*\srr H$ to the pullback 
2-gerbe $\iota^*\omega$.

The twisted group algebra
$A_{H,\theta} \,{:=}\, \complex_\theta[H]$ is then a (haploid special symmetric) 
Frobenius algebra in $\GVect^\omega$. For any 1-cochain $\chi$ on $H$ the 
algebras $A_{H,\theta}$ and $A_{H,\theta+d\chi}$ are isomorphic. Thus, given 
a subgroup $H$ the isomorphism classes of algebras form a torsor over 
$H^2(H,\complexx)$. Indecomposable module categories over $\GVect^\omega$ 
are given by Morita classes of twisted group algebras. They 
are thus in bijection with equivalence
classes of pairs $(H,\theta)$; we denote them by $\calm_{H,\theta}$. 

Actually, any pair consisting of a 
group homomorphism $\iota\colon  H\To G$ and a 2-cochain $\theta$ on $H$ 
such that $\iota^*\omega \eq \rmd\theta$ defines a module category, albeit 
not an indecomposable one unless $\iota$ is injective. For the case that both 
$\omega$ and $\theta$ vanish, this is discussed in the Appendix.

\section{Categories of generalized Wilson lines in Dijkgraaf-Witten theories}
\label{s:3}

We are now ready to discuss Dijkgraaf-Witten theories with boundaries and
defects. Our ultimate goal is 
to consider such a theory as a 1-2-3-extended topological
field theory. Concretely this means:
 \beginitemize
\item
To a decorated smooth oriented one-dimensional manifold, we have to 
assign a finitely semisimple  \complex-linear category.
\\
This category will have the interpretation of a category of (generalized) 
Wilson lines. The one-dimensional manifold is allowed to have boundaries, 
corresponding to physical boundaries of the three-dimensional theories, and 
to have marked points, corresponding to surface defects.
\item
To a decorated smooth oriented two-dimensional manifold we have to assign 
a \complex-linear functor. A two-dimensional manifold can have physical
boundaries and lines corresponding to surface defects. Moreover, it can have
cut-and-paste boundaries which are one-dimensional manifolds of the type
described in the first item. These cut-and-paste boundaries determine the
categories which are the source and target for the functor associated to the
two-manifold.
\item
To a decorated three-manifold with corners, we have to associate
a natural transformation.
\end{itemize}

\subsection{Decorated one-manifolds and categories of generalized bundles}
            \label{ss3.1}

In the present paper we concentrate on examples and restrict
our attention to one-dimensional manifolds.
We should also keep in mind that cut-and-paste boundaries have been
introduced to implement locality. Accordingly we impose the condition
that a cutting is transversal to any additional decoration data such
as surface defects or generalized Wilson lines.

This leaves us with two types of connected one-manifolds only:
 \beginitemize
\item
An interval which is partitioned by finitely many distinct points 
in its interior.
\item
A circle that is partitioned by finitely many distinct points.
\end{itemize}
For the situations shown in \erf{Figure13+Figure14} above, the 
cutting leading to such one-manifolds is indicated in the following picture: 
  \eqpic{Figure17+Figure18}{420}{69} {
  \put(-15,0) {\Includepic{17}
  }
  \put(205,0) {\Includepic{18}
  } }

Every subinterval of such a one-manifold is decorated by a Dijkgraaf-Witten 
theory. The decoration datum for each subinterval is thus a finite group $G$ 
together with a 3-cocycle $\omega\iN Z^3(G,\complexx)$. The locality of the
geometric construction of Dijkgraaf-Witten theories \cite{free2,mort6} then suggests 
that $G$-bundles on these intervals should appear in our construction.

However, we also must assign data to the end points of a subinterval. Recall 
from Section \ref{ss2.1} that the general construction of Dijkgraaf-Witten
theories consists of two steps: first finding an appropriate stack of bundles, 
leading to spans of groupoids, which then have to be linearized with the help 
of Lagrangian data. In the situation at hand, the relevant categories are 
variants of relative bundles which have been introduced in Section 
\ref{ssec:mb}. In the case of an interval without marked points in the 
interior, the morphism defining the relative manifold is the embedding 
of the end points. 

One might thus pick a group homomorphism $\iota\colon H\To G$ and assign 
$H$-bundles to the two end points. This is, however, not the most general 
situation one can consider -- for complying with locality we must allow for the 
possibility to assign different local conditions to the two end points of the 
interval. Thus we select possibly different groups $H_i$, $i\eq1,2$, and group 
homomorphism $\iota_i\colon H_i\To G$ separately for each end point $p_1,p_2$ 
and consider the following category: an object consists of a $G$-bundle $P_G$ 
over the interval, an $H_1$-bundle $P_{H_1}$ over $p_1$, a morphism
$\Ind_{\iota_1^{}} P_{H_1}\To (P_G)|_{p_1^{}}$ of $G$-bundles on $p_1$, 
an $H_2$-bundle $P_{H_2}$ over $p_2$, and a morphism
$\Ind_{\iota_2^{}} P_{H_2}\to (P_G)|_{p_2^{}}$ of $G$-bundles on $p_2$.

This leads to the following assignment of kinematical data. At the level of 
groups, we associate to an end point of an interval that is labeled by a 
group $G$ a group homomorphism $\iota\colon H\To G$, with $H$ some finite 
group. This prescription still needs to be complemented by group cohomological 
Lagrangian data; these will be introduced in Section \ref{ss3.2}.

\medskip

Example \ref{ex1b} allows us to determine directly a finite action groupoid 
that is relevant for an interval without any marked interior points, labeled 
by a group $G$, and with end points labeled by groups $H_1$, $H_2$ and group 
homomorphisms $\iota_1\colon H_1\To G$ and $\iota_2\colon H_2\To G$
respectively: it is given by
  \be
  G\sll G\Times G \srr_{\!\iota_1^-\times\iota_2^-} H_1 \Times H_2 \,.
  \labl{G-GG-H1H2}
Here $G$ acts from the left as the diagonal subgroup, while $H_1$ is mapped 
via $\iota_1$ to the first copy of $G$ and acts by right multiplication after 
taking the inverse; the action of $H_2$ is analogous, the only difference 
being that it is mapped by $\iota_2$ into the second copy of $G$. 
Let us describe the structure of this groupoid: its set of objects is 
given by a Cartesian product of groups, one factor for each pair consisting
of a marked point and a neighbouring interval. The group is determined by the
interval, since it comes from the morphism of bundles in the corresponding
relative bundle. The morphisms in the groupoid are gauge transformations:
the $G$-action describes gauge transformations of the $G$-bundle on the 
interval and acts by multiplication from the left. The $H_i$-actions are by 
multiplication from the right after having taken the inverse; their origin are 
$H_i$-gauge transformations of the $H_1$-bundles on the respective marked point.

\medskip

This picture generalizes to marked points in the interior, either
of an interval or of a circle. To any such point two intervals are adjacent, 
which are labeled by gauge groups $\Gl$ and $\Gr$, respectively. 
To describe the resulting relative manifold,
consider as an example the closed interval $[0,1]$ with a marked interior 
point $p_1 \,{:=}\, \frac12$. Take for $X$ the disjoint union 
$X \,{:=}\, [0,\frac12] \,{\sqcup}\, [\frac12,1]$. One should appreciate that 
in $X$ the point $p_1$ is ``doubled''. By locality, the category of bundles is 
now defined with separate data for each of the marked points $p_0 \eq 0$, $p_1 
\eq \frac12$ and $p_2 \eq 1$. For $p_0$ and $p_2$ we select again group 
homomorphisms $\iota_0\colon H_0\To \Gl$ and $\iota_2\colon H_2\To \Gr$.
At $p_1$ we take as a datum a finite group $H_1$ and a group homomorphism 
$\iota\colon H_1\To \Gl\Times \Gr$ or, equivalently, a pair of group 
homomorphisms $\iota_{\rm l}\colon H_1\To \Gl$ and 
$\iota_{\rm r}\colon  H_2\To \Gr$.

We consider thus for a given one-manifold $S$ the following geometric category:
an object is the assignment of a $G$-bundle to each subinterval labeled by a 
finite group $G$ and of $H$-bundles to marked points in the interior or end 
points. The final datum are compatible morphisms from induced bundles 
to restrictions of bundles at all marked points.
We denote this geometric category by $\Bun(S)$.

\begin{Definition}\label{def:pre-DW}
~\\[2pt]
(i)\,
A one-dimensional \emph{pre-DW manifold} is a smooth one-dimensional 
manifold $S$, possibly with boundary, together with the following data:
 \beginitemize
\item
A finite set $P_S$ of points of $S$, containing all boundary points of $S$.
\\
We refer to the elements of $P_S$ as \emph{marked} points, and to a connected 
component of $S\,{\setminus}\, P_S$ as a \emph{subinterval} of $S$. We choose 
an orientation for each subinterval.
\item
To each subinterval of $S$ we associate a finite group.
\item
To a marked point $p\iN P_S$ that is a boundary point and is thus adjacent to 
a single subinterval $I$ with associated group $G$, we select a finite group
$H$ and a group homomorphism $\iota\colon H\To G$. 
\\
To a marked point $p\iN P_S$ that is not a boundary point of $S$ and is thus
adjacent to two subintervals $I_1$ and $I_2$, labeled by finite groups $G_1$ 
and $G_2$, respectively, we select a finite group $H$ and a pair of group 
homomorphisms $\iota_i\colon H\To G_i$.
\end{itemize}
 \noindent
(ii)\,
To a one-dimensional
pre-DW manifold $S$, we associate the category $\Bun(S)$ of bundles
described above. This is an essentially finite groupoid.
~\\[4pt]
(iii)\,
Each subinterval of a one-dimensional
pre-DW manifold $S$ is endowed with an orientation.
Thereby any marked point $p\iN P_S$ is either a start point or an end point for 
any interval $I$ adjacent to $p$. In the first case, we set 
$\epsilon(p,I) \,{:=}\, {+}1$, in the latter $\epsilon(p,I) \,{:=}\, {-}1$.
\end{Definition}

To make contact with the results in \cite{fusV} which use the theory of module
categories, we need to find finite groupoids that are equivalent to groupoids
$\Bun(S)$ of relative bundles of pre-DW manifolds. This is the goal of the 
remaining part of this subsection.

As a first example, consider a circle with one marked point, which corresponds 
to a surface defect. If we associate to the interval the group $G$, then
we have to associate to the defect a group homomorphism 
$\iota\colon H\to G\Times G$, and the resulting action groupoid is
  \be
  G\sll G\Times G \srr_{\!\iota^-} H \,.
  \ee
Of particular interest is the case that the group homomorphism $\iota$
is the embedding homomorphism 
of the diagonal subgroup $G \le G\Times G$. We denote by $G \srrad G$ 
the action groupoid for the left adjoint action of $G$ on itself. The functor
  \be
  F:\quad  G\sll G\Times G \srr G \,\to\, G \srrad G
  \ee
that acts on objects as $F(\gamma_1^{},\gamma_2^{}) \eq \gamma_1^{}\gamma_2^{-1}$
and on morphisms as
  \be
  F\left((\gamma_1^{},\gamma_2^{}) \,{\stackrel{(h_1,h_2)} \longrightarrow}\,
  (h_1^{}\gamma_1^{}h_2^{-1},h_1^{}\gamma_2^{}h_2^{-1})\right) =
  \left( \gamma_1^{}\gamma_2^{-1} \,{\stackrel{h_1} \longrightarrow\,}
  h_1^{}\gamma_1^{}\gamma_2^{-1} h_1^{-1}\right)
  \ee
is an equivalence of categories. We will see that the linearization of the
adjoint action groupoid together with the relevant cocycle (see formula 
\erf{325}) produces the appropriate category associated to the circle without
marked points, i.e.\ the category of ordinary bulk Wilson lines. 

As a more involved  example, let us discuss a circle with two marked points.
We describe the circle as $S^1 \eq\{z\iN\complex \,|\, |z| \eq 1 \}$ and take 
the marked points to be $\pm \ii \iN S^1$. For the two intervals that consist of 
points with positive and negative real parts, respectively, we choose groups
$G_>$ and $G_<$, respectively. At the points $\pm\ii$, we choose group 
homomorphisms
  \be
  \iota_+: \quad H_+ \to G_> \Times G_< \qquad\text{ and }\qquad
  \iota_-: \quad H_- \to G_< \Times G_> \, .
  \ee
The relevant action groupoid is then
  \be
  G_>\Times G_< \sll
  G_>\times G_< \Times G_< \Times G_> \srr_{\!\iota_+^-\times\iota_-^-}
  H_+ \Times H_- \,,
  \labl{GG.GGGG.HH}
where the action of $G_>$ and $G_<$ is again diagonal and the left
action of $H_{\pm}$ is again by right multiplication preceded by applying the 
relevant group homomorphism and taking inverses. This description generalizes in 
an obvious manner to circles with an arbitrary finite number of marked points. 
The generalization to intervals with an arbitrary finite number of marked 
points is easy as well. We have thus succeeded in describing for a specific 
type of one-dimensional
pre-DW manifold the category $\Bun(S)$ by a finite action groupoid.

We discuss again a specific case: suppose that $G_> \eq G_< \,{=:}\,G$ and
that $H_+ \,{\cong}\, G \,{\stackrel{d}\to\,} G\Times G$ is the dia\-gonal 
subgroup, while $\iota_- \eq \iota \colon H\To G\Times G$ is an arbitrary 
group homomorphism. Then the relevant action groupoid is
  \be
  G\Times G \sll G\Times G\Times G\Times G \srr_{d^-\times\iota^-} G\Times H
  \labl{GG-GGGG-GH}
with the first copy of $G$ in the gauge group $G \Times G$
acting on the first and forth copies of $G$ in $G \Times G \Times G \Times G$
by left multiplication and the second copy of $G$ acting on the second and 
third copies. The left action of $G$ on the right is as a subgroup
of the first and second copy of $G$. The action groupoid \erf{GG-GGGG-GH}
is equivalent to the action groupoid
  \be
  G\sll G\Times G \srr_{\!\iota^-} H
  \labl{G-GG-H}
via the functor $F$ that acts on objects as
  \be
   F(\gamma_1^{},\gamma_2^{},\gamma_3^{},\gamma_4^{})
  := (\gamma_1^{} \gamma_2^{-1} \gamma_3^{},\gamma_4^{})
  \labl{F12}
and maps the morphism
  \be
   \xymatrix{
  (\gamma_1^{},\gamma_2^{},\gamma_3^{},\gamma_4^{})\,
  \ar^{(g_1^{},g_2^{},g,h)\hspace*{58pt}}[rr] &&
  \,(g_1^{}\gamma_1^{}g^{-1},g_2^{}\gamma_2^{}g^{-1},g_2^{}\gamma_3^{}h^{-1},
  g_1^{}\gamma_4^{}h^{-1})
  } \labl{g1mor}
in the groupoid \erf{GG-GGGG-GH} to the morphism
  \be
  \xymatrix{
  (\gamma_1^{} \gamma_2^{-1} \gamma_3^{},\gamma_4^{})\,
  \ar^{(g_1^{},h)\hspace*{31pt}}[rr]
  && \, (g_1^{} \gamma_1^{} \gamma_2^{-1}\gamma_3^{}h^{-1},
  g_1^{}\gamma_4^{} h^{-1})
  } \labl{g2mor}
in \erf{G-GG-H}. It is straightforward to check that this functor is
surjective and a bijection on morphism spaces and is thus an equivalence
of groupoids.

\subsection{Lagrangian data and linearization of groupoids}\label{ss3.2}

We now proceed to the linearization process. This requires
additional data which come from the cohomology of the
groupoids that have to be linearized. These data have the physical 
interpretation of (topological) Lagrangians and appropriate boundary terms.

We introduce such additional data as follows.
To an end point of an interval that is adjacent to a subinterval labeled by
a finite group $G$ and 3-co\-cyc\-le $\omega$ we associate a group homomorphism 
$\iota\colon H\To G$ and a 2-co\-chain $\theta \iN C^2(H,\complexx)$ such that 
${\rm d} \theta \eq \iota^*\omega$. It is appropriate to think about $\theta$ 
as a morphism ${\rm triv}\To \iota^*\omega$ of 2-gerbes on the groupoid 
${*}\srr H$.
The situation can be regarded as a higher categorical analogue of the role
played by gerbe modules in the description of boundary conditions in
two-dimensional theories with non-trivial Wess-Zumino terms (see e.g.\ 
\cite[Sect.\,6]{fnsw} for an exposition using gerbes and gerbe modules).
In the two-dimensional situation, one has a gerbe module on a submanifold
$\iota\colon \Sigma\To M$, which amounts to a 1-morphism 
$I_\omega\To \iota^*{\cal G}$ of gerbes on $\Sigma$ from a trivial gerbe 
$I_\omega$ to the restriction of the gerbe $\cal G$ on $M$. In the present 
situation we have a module of a 2-gerbe; technical simplifications come from 
the fact that the groups we deal with are finite 
and that thus any infinitesimal data related to connections are trivial.

In the case of two intervals adjacent to one another, labeled by 
$(G_1,\omega_1)$ and $(G_2,\omega_2)$, respectively, we choose a group 
homomorphism $\iota \eq (\iota_1,\iota_2)\colon H\to G_1\Times G_2$ and 
a 2-cochain $\theta$ on $H$ such that $d\theta \eq (\iota_2^*\omega_2^{})
\cdot (\iota_1^*\omega_1^{})^{-1}$. Again the situation has an analogue in 
two dimensions: defects in backgrounds with non-trivial Wess-Zumino term are 
described by gerbe bimodules and bibranes,
see \cite{fusW} and \cite[Sect.\,7]{fnsw} for a review.

We summarize these prescriptions in the following

\begin{Definition}\label{def:DW}
A one-dimensional \emph{DW manifold} is a one-dimensional pre-DW manifold 
$S$ together with the following choice of Lagrangian data:
\beginitemize
\item
To each subinterval of $S$ with finite group $G$, we associate a closed
3-cochain on $G$.

\item
To a marked boundary point $p\iN P_S \,{\cap}\, \partial S$ adjacent to a 
subinterval with group $G$ and 3-cocycle $\omega\iN Z^3(G,\complexx)$ and 
labeled with a group homomorphism $\iota\colon H\To G$, we assign a 2-cochain 
$\theta\iN C^2(H,\complexx)$ such that 
  \be
  \rmd\theta = \iota^*\omega^{\epsilon(p,I)} ,
  \ee
with $\epsilon(p,I)$ as defined in Definition \ref{def:pre-DW}(iii).
 
\item
To a marked interior point $p\iN P_S \,{\setminus}\, \partial S$ adjacent to 
subintervals $I_1$ and $I_2$ with group homomorphisms $\iota_i\colon H\to G_i$
we assign a cochain $\theta\iN C^2(H,\complexx)$ such that 
  \be
  \rmd\theta = 
  \iota_1^*\omega_1^{\epsilon(p,I_1)}\cdot \iota_2^*\omega_2^{\epsilon(p,I_2)} .
  \ee
\end{itemize}
\end{Definition}

We now use the data of a DW manifold to define twisted linearizations 
of the groupoids that we constructed in the previous subsection.  Let us
describe the general idea of a twisted linearization of a finite groupoid
$H \sll G$ given by a left action of a group $H$ on a set $G$.
The ordinary linearization is the functor category $[H \sll G,\Vectc]$.
An object of this category is given by
 \beginitemize

\item 
A finite-dimensional vector space $V_\gam$ for each element $\gam\iN G$.

\item
For each $\gam\iN G$ and $h\iN H$ a linear map 
$\rho_h\colon V_\gam\To V_{h.\gam}\,$
such that the diagram
  \be
  \xymatrix{
  & V_{h_2.\gam} \ar^{\rho_{h_1^{}}^{}} [dr] & \\
  V_\gam \ar^{\rho_{h_2^{}}^{}}[ur]\ar_{\rho_{h_1h_2}^{}}^{} [rr] && 
  V_{h_1h_2.\gam} 
  } \ee
commutes for all $\gam\iN G$ and $h_1,h_2\iN H$.
\end{itemize}
Morphisms in the functor category are natural transformations; explicitly, 
they are $G$-homo\-ge\-ne\-ous maps commuting with the $H$-action.

The additional input datum for a \emph{twisted} linearization is a 2-cocycle 
$\tau$ on the groupoid $H \sll G$.
This gives rise to the following twisted version of the functor category 
$[H \sll G,\Vectc]$ (see also \cite[Sect.\,5.4]{mort6}):

\begin{defi}
The $\tau$-\emph{twisted linearization} of the groupoid $H \sll G$, denoted by
${[H \sll G,\Vectc]}^\tau$, is the following category. An object of 
${[H \sll G,\Vectc]}^\tau$ consists of
 \beginitemize

\item 
A finite-dimensional vector space $V_\gam$ for each $\gam\iN G$.

\item
For each $h\iN H$ a linear map $\rho_h\colon V_\gam \To V_{h.\gam}$ such 
that the composition law of the $H$-action is realized projectively, i.e.\ up 
to the scalar factor $\tau(h_1,h_2;\gam) \iN \complexx$. Diagrammatically,
  \be
  \xymatrix @R+17pt{
  & V_{h_2^{}.\gam} \ar^{\rho_{h_1^{}}} [dr] 
  \ar@{=>}^{\,\begin{turn}{270}~ $\scriptstyle\tau(h_1,h_2;\gam)$
  \end{turn}}[d]& \\
  V_\gam \ar^{\rho_{h_2}^{}}[ur]\ar_{\rho_{h_1h_2}^{}} [rr] 
  && V_{h_1h_2.\gam} 
  } \ee
As a formula,
  \be
  \rho_{h_1 h_2}^{}= \tau(h_1,h_2;\gam)\, \rho_{h_1}^{}\, \rho_{h_2}^{} \,.
  \ee
\end{itemize}
Morphisms of ${[H \sll G,\Vectc]}^\tau$ are $G$-homogeneous 
maps commuting with the $H$-action.
\end{defi}

\subsection{2-cocycles from Lagrangian data}\label{ss3.3}

Our next task is thus to use the Lagrangian data that 
are part of the data of a one-dimensional DW-manifold. We have assigned them
in Definition \ref{def:DW} to intervals and circles with marked points
to produce 2-cocycles for the groupoids discussed in Section \ref{ss3.1}.
For brevity we consider in this subsection Lagrangian data for boundaries 
only; the discussion for surface defects is similar.

Any homomorphism $\iota\colon H\To G$ of finite groups provides 
a morphism $\iota\colon BH\To BG$ of the corresponding classifying spaces.
Assume now 
that we are given a 3-cocycle $\omega\iN Z^{3}(BG,\complexx)$ and a 2-cochain
$\theta\iN C^{2}(BH,\complexx)$ such that
  \be
  i^{*}\omega={\rm d}\theta \,.
  \ee

We recall that a $G$-bundle on a manifold $M$ can be described by a map from 
$M$ to the classifying space $BG$. Morphisms of bundles can be described by 
homotopies between such maps. Thus for $\Sigma$ an oriented one-dimensional 
manifold with boundary, a relative bundle on the relative manifold 
$(\Sigma,\partial\Sigma)$ leads to the following data (up to homotopy):
 \beginitemize
\item
A map $f\iN \Map(\Sigma,BG)$ describing a $G$-bundle on $\Sigma$.

\item
A map $g\iN\Map(\partial\Sigma,BH)$ describing an $H$-bundle on $\partial\Sigma$.

\item
A homotopy describing the morphism of bundles, i.e.\ a map
$h\iN \Map([0,1],\Map(\partial\Sigma,BG))$, with $[0,1]$ the standard interval.
\end{itemize}
We will later need the subset $X_\circ$ consisting of such triples $(f,g,h)$ 
subject to the condition that $h$ is a homotopy relating the maps 
$f|_{\partial\Sigma}$ and $\iota\cir g$ from $\partial\Sigma$ to $BG$,
  \be
  X _\circ := \{\, (f,g,h) \,|\,
  f\big|_{\partial\Sigma} \simeq^{h} i\circ g \,\} \,.
  \labl{defX}
Each point of $X _\circ$ describes a relative bundle, i.e.\ an object of 
$\Bun_{(G,H)}(\partial\Sigma{\to}\Sigma)$. Isomorphism classes of relative 
bundles are in bijection with the set $\pi_0(X _\circ)$ of connected 
components of $X _\circ$.

{}From the cohomological data $\omega$ and $\theta$ we now build a 2-cocycle 
in $Z^{2}(X_\circ,\complexx)$. To this end we use the evaluation map
  \be
  \ev: \quad \Sigma\times\Map(\Sigma,BG) \to {BG}
  \ee
to define a cochain 
$\tau_{\Sigma}(\omega)\iN C^{2}(\Map(\Sigma,BG),\complexx)$ by
  \be
  \tau_{\Sigma}(\omega) := \int_{\Sigma} \ev^{*}\omega \,,
  \ee
where $\int_{\Sigma}$ denotes the pushforward along the fibration 
$p_2\colon \Sigma\Times\Map(\Sigma,BG) \To \Map(\Sigma,BG)$. 
As $\Sigma$ can have a non-empty boundary, there is, in general, 
no reason that the cochain $\tau_{\Sigma}(\omega)$ should be closed.

By the same procedure we obtain a 2-cochain $\tau_{\partial\Sigma}(\theta)
\iN C^2(\Map(\partial\Sigma,BH)),\complexx)$, as well as
a 2-cochain $\tau_{[0,1]}(\tau_{\partial\Sigma}(\omega))
\iN C^{2}(\Map([0,1],\Map(\partial\Sigma,BG)),\complexx)$.
We then consider the product space 
  \be
  X := \Map(\Sigma,BG) \times \Map(\partial\Sigma,BH) \times
  \Map([0,1],\Map(\partial\Sigma,BG)) \,.
  \labl{MaMaMa}
The pullbacks along the canonical projections $p_i$ to the three factors of 
\erf{MaMaMa} supply us with a 2-cochain on $X$:
  \be
  \varphi := p_{1}^{*}\tau_{\Sigma}(\omega) - p_{2}^{*}\tau_{\partial\Sigma}
  (\theta) -p_{3}^{*} \tau_{[0,1]}(\tau_{\partial\Sigma}(\omega)) \,.
  \labl{varphidef}

The space $X_\circ$ introduced in \erf{defX} to describe relative bundles is
by definition a subspace of $X$ \erf{MaMaMa}. 
The central insight is now that the 2-cochain that is obtained by restricting 
$\varphi$ to the subspace $X_\circ$ of $X$ is closed,
  \be
  {\rm d}{\varphi|_{X_\circ^{}}}^{} = 0 \,.
  \labl{dphiX=0}
In other words, we have obtained a 2-cocycle $\varphi|_{X_\circ^{}}^{} \iN 
Z^{2}(X_\circ,\complexx)$ on the space $X_\circ$ describing relative bundles.

To see that \erf{dphiX=0} holds, we work for the moment with differential forms 
and consider an arbitrary manifold $U$. Consider
$\alpha\iN\Omega_{\rm cl}^{3}(\Sigma\Times{U},\mathbb{R})$ 
and $\beta\iN\Omega^2(\partial\Sigma\Times U),\mathbb{R})$ obeying
$ \alpha|^{}_{\partial{\Sigma}\times{U}} \eq \rmd\beta $. 
Taking into account that $\Sigma$ has a boundary, we have
  \be
  \rmd(\int_{\Sigma}\alpha)
  = \int_{\Sigma}\rmd\alpha 
  + \int_{\partial\Sigma}\alpha\big|^{}_{\partial\Sigma\times{U}}
  = \int_{\Sigma}\rmd\alpha + \int_{\partial\Sigma} \rmd\beta
  = \int_{\partial\Sigma}\rmd\beta \,.
  \ee
This means that the form
  \be
  \phi := \int_{\Sigma}\alpha-\int_{\partial\Sigma}\beta
  ~\in \Omega^2(U,\mathbb{R})
  \ee
is closed, $\rmd\phi \eq 0$. The same argument applies to
elements in $Z^{3}(\Sigma\Times{U},\complexx)$ where slant products
are used as the analogue of integration along the fiber.

The argument can now be applied to the situation of our interest: The role of 
$\int_\Sigma\alpha$ is then played by $p_{1}^{*}\tau_\Sigma^{}(\omega)|^{}
_{X_\circ}$ and the role of $\int_{\partial\Sigma}\beta$ by $(p_{2}^{*}\tau
_{\partial\Sigma}^{} (\varphi)+p_{3}^{*} \tau_{[0,1]}^{}
(\tau_{\partial\Sigma}^{}(\omega)))|^{}_{X_\circ}$. Their difference is 
precisely the combination $\varphi$ introduced in \erf{varphidef}. From
the relation $ \alpha|^{}_{\partial{\Sigma}\times{U}} \eq \rmd\beta $
we thus obtain the desired equality \erf{dphiX=0}.

\subsection{Graphical calculus for groupoid cocycles}\label{ss:gcc}

Generalizing the approach of \cite{willS}, we can achieve a more combinatorial 
description of the 2-co\-cyc\-les on the group\-oids derived in Section 
\ref{ss3.1}. We formulate it with the help of an algorithm which is based on
three-dimensional diagrams and their decomposition into simplices. The diagrams
are obtained from a graphical representation of the groupoids involved.

We start with a one-dimensional diagram, drawn vertically, which represents 
a one-dimen\-si\-o\-nal  pre-DW manifold to which we wish to associate a 
category by linearization. 
These manifolds are circles or intervals with finitely many marked points, 
including boundary points in the case of intervals. Each subinterval is
marked by a finite group $G_i$ and a 3-cocycle $\omega_i\iN Z^3(G_i,\complexx)$.
For each marked point we have a group $H_j$ and group homomorphisms to the
groups associated with the adjacent intervals. The data characterizing 
an object in the associated groupoid described in Section \ref{ss3.1}
are then elements in the groups $G_i$ associated to the 
subinterval, one for each point adjacent to the subinterval. 
 
Our convention is now to draw an empty circle for a marked point and to replace
the original subintervals by filled circles. Between these circles we draw edges
which are labeled by elements of the groups $G_i$ that are part of the data 
describing a relative bundle. An example is depicted in the following picture:
  \eqpic{Figure1}{130}{59} {
  \put(0,0)   {\Includepic{1a}
  \put(5.8,-.2) {\scriptsize $ H_3^{}\,,\iota_3^{} $}
  \put(5.4,23)  {\scriptsize $ G_3^{} $}
  \put(5.8,44.4){\scriptsize $ H_{23}^{}\,,\iota_{23}^2\,,\iota_{23}^3 $}
  \put(5.4,67.6){\scriptsize $ G_2^{} $}
  \put(5.8,89)  {\scriptsize $ H_{12}^{}\,,\iota_{12}^1\,,\iota_{12}^2 $}
  \put(5.4,112) {\scriptsize $ G_1^{} $}
  \put(5.8,133.6) {\scriptsize $ H_1^{}\,,\iota_1^{} $}
  }
  \put(78,64)   {{\large$ \rightsquigarrow $}}
  \put(130,0) {\Includepic{1b}
  \put(-8.5,17) {\scriptsize $ \gamma_6^{} $}
  \put(-8.5,33) {\scriptsize $ \gamma_5^{} $}
  \put(-8.5,62) {\scriptsize $ \gamma_4^{} $}
  \put(-8.5,78) {\scriptsize $ \gamma_3^{} $}
  \put(-8.5,106){\scriptsize $ \gamma_2^{} $}
  \put(-8.5,123){\scriptsize $ \gamma_1^{} $}
  } }
The figure on the left hand side of \erf{Figure1} shows the pre-DW-manifold $S$
which is an interval with two interior marked points, together with the relevant 
groups and group homomorphisms. The labels in the figure on the right hand side 
are group elements $\gamma_1^{},\gamma_2^{} \iN G_1$, 
$\gamma_3^{},\gamma_4^{} \iN G_2$ and $\gamma_5^{},\gamma_6^{} \iN G_3$i 
that specify an object in $\Bun(S)$..

A morphism in the groupoid consists of elements of the groups $H_j$ and
$G_i$ describing gauge transformations of the involved bundles. We represent
such morphisms by two-dimensional diagrams with oriented edges as follows:
  \eqpic{Figure2}{55}{59} {
  \put(0,0)     {\Includepic{2}}
  \put(-8.5,17) {\scriptsize$ \gamma_6^{} $}
  \put(-8.5,33) {\scriptsize$ \gamma_5^{} $}
  \put(-8.5,62) {\scriptsize$ \gamma_4^{} $}
  \put(-8.5,78) {\scriptsize$ \gamma_3^{} $}
  \put(-8.5,106){\scriptsize$ \gamma_2^{} $}
  \put(-8.5,123){\scriptsize$ \gamma_1^{} $}
  \put(43,-5.5) {\scriptsize$ h_3^{} $}
  \put(43,18.8) {\scriptsize$ g_3^{} $}
  \put(43,39.2) {\scriptsize$ h_{23}^{} $}
  \put(43,63.5) {\scriptsize$ g_2^{} $}
  \put(43,83.8) {\scriptsize$ h_{12}^{} $}
  \put(43,108.2){\scriptsize$ g_1^{} $}
  \put(43,128.4){\scriptsize$ h_1^{} $}
  \put(70,16)   {\scriptsize$ \gamma_6' $}
  \put(70,32)   {\scriptsize$ \gamma_5' $}
  \put(70,61)   {\scriptsize$ \gamma_4' $}
  \put(70,77)   {\scriptsize$ \gamma_3' $}
  \put(70,105)  {\scriptsize$ \gamma_2' $}
  \put(70,122)  {\scriptsize$ \gamma_1' $}
  } 
Here horizontal edges connecting empty circles are labeled by elements of the
groups $H_j$, while horizontal edges connecting filled circles are labeled by 
elements of the groups $G_i$. For each square in the diagram there is a 
consistency condition relating the labels of its edges. To formulate this 
condition, we adopt the convention that orientation reversal amounts to 
inversion of the group element that labels the edge:
  \eqpic{Figure15}{60}{13} {
  \put(0,0)   {\Includepic{15a}
  \put(-8.5,19) {$ \gamma $}
  }
  \put(37,14)   {\large $ \hat= $}
  \put(77,0) {\Includepic{15b}
  \put(5.8,19.6){$ \gamma^{-1} $}
  } }
With this convention the product of all group elements (possibly after applying
an appropriate group homomorphism $H_j \To G_i$) along a closed curve equals 
the neutral element; we refer to this relation as the \emph{holonomy condition}. 
For instance, the holonomy condition for the top square in \erf{Figure2} 
is the equality
  \be
  \gamma_1'\cdot \iota_1(h_1)= g_1\cdot \gamma_1
  \ee
in $G_1$.
This determines the element $\gamma_1'$ of $G_1$, or alternatively $\gamma_1$ 
or $g_1$, as a function of the three other group elements. Also, in case the 
homomorphism $\iota_1$ is injective it alternatively fixes $h_1 \iN H_1$ in 
terms of the three other elements., 

We wish to obtain a 2-cocycle on the groupoid we have just described.
For a general groupoid $\Gamma=(\Gamma_{\!0},\Gamma_{\!1})$ with sets 
$\Gamma_{\!0}$ of objects and $\Gamma_{\!1}$ of morphisms we define the 
2-cocycle by its values $\tau(g_1^{},g_1';\gamma)$ for an object 
$\gamma\iN\Gamma_{\!0}$ and two compatible morphisms $g_1^{},g_1'\iN 
\Gamma_{\!1}$. We depict these values graphically as triangles,
  \eqpic{Figure3}{190}{17} {
  \put(0,22)    {$ \tau(g_1^{},g_1';\gamma) ~= $}
  \put(82,2)    {\Includepic{3}
  \put(-6,.7)   {\scriptsize $ \gamma $}
  \put(42,-3.8) {\scriptsize $ g_1^{} $}
  \put(71.3,-.2){\scriptsize $ g_1^{} \gamma $}
  \put(85,37)   {\scriptsize $ g_1'g_1^{} $}
  \put(108,21.5){\scriptsize $ g_1' $}
  \put(136.3,42){\scriptsize $ g_1'g_1^{} \gamma $}
  } }
(Again the holonomy condition is in effect: we have
$(g_1'g_1^{})^{-1} g_1'g_1^{} \eq e$.)

Now in the situation of our interest, in which we represent objects 
and morphisms of the groupoid by one-dimensional and two-dimensional graphical 
elements, respectively, we obtain a graphical representation 
of the 2-cocycle by a piecewise-linear three-manifold. In the case 
of an interval considered in \erf{Figure2} -- but now, for simplicity, with 
only a single interior marked point -- this three-manifold looks as follows:
  \eqpic{Figure4}{80}{72} {
  \put(0,0)     {\Includepic{4}}
  \put(-8.5,17) {\scriptsize $ \gamma_4^{} $}
  \put(-8.5,51) {\scriptsize $ \gamma_3^{} $}
  \put(-8.5,77) {\scriptsize $ \gamma_2^{} $}
  \put(-8.5,112){\scriptsize $ \gamma_1^{} $}
  \put(43,-5.5) {\scriptsize $ h_2^{} $}
  \put(43,27.1) {\scriptsize $ g_2^{} $}
  \put(43,55.5) {\scriptsize $ h_{12}^{} $}
  \put(43,87.8) {\scriptsize $ g_1^{} $}
  \put(43,116)  {\scriptsize $ h_1^{} $}
  \put(108,21)  {\scriptsize $ h_2' $}
  \put(108,53.2){\scriptsize $ g_2' $}
  \put(108,81.7){\scriptsize $ h_{12}' $}
  \put(108,114) {\scriptsize $ g_1' $}
  \put(108,142) {\scriptsize $ h_1' $}
  }
Here the labeling of all lines for which the labels are not indicated explicitly
is fixed as a function of the displayed labels by the holonomy condition.

\medskip

Following the strategy in \cite{willS}, 
our goal is now to cut the so obtained three-manifolds into standard pieces to 
which we can naturally assign values in $\complexx$. The value of the groupoid
2-cocycle is then given by the product of the numbers associated with the 
various standard pieces into which the three-manifold is decomposed.
In our situation, in which also physical boundaries and surface defects are
present, there are \emph{two} types of standard pieces:
 \beginitemize
\item
First, a 3-simplex whose edges are all labeled by elements $g_1,g_2,g_3,...$
of a group $G$ with 3-cocycle $\omega \iN Z^3(G,\complexx)$, subject to the 
holonomy condition. To such a 3-simplex
  \eqpic{Figure5}{120}{47} {
  \put(0,0)     {\Includepic{5}
  \put(42,-3.6) {\scriptsize$ g_1^{} $}
  \put(94,37.6) {\begin{turn}{20}\scriptsize$ g_2^{} g_1^{} $\end{turn}}
  \put(82,78.2) {\begin{turn}{42}\scriptsize$ g_3^{} g_2^{} g_1^{} $\end{turn}}
  \put(99,71.5){\begin{turn}{62}\scriptsize$ g_3^{} g_2^{} $\end{turn}}
  \put(108,23.3){\scriptsize$ g_2^{} $}
  \put(133.5,88){\scriptsize$ g_3^{} $}
  } }
we associate the number
  \be
  \tilde\omega(g_1,g_2,g_3)
  := \omega(g_1^{-1},g_2^{-1},g_3^{-1}) \,\in \complexx .
  \ee
\item
Second, a horizontal triangle whose edges are correspondingly labeled by 
elements of a group $H$ with 2-cochain $\theta$. To such a triangle 
  \eqpic{Figure6}{110}{14} {
  \put(0,2)     {\Includepic{6}
  \put(43,-5.5) {\scriptsize$ h_1^{} $}
  \put(82.4,37) {\scriptsize$ h_2^{} h_1^{} $}
  \put(110,22.3){\scriptsize$ h_2^{} $}
  } }
we associate the number 
  \be
  \tilde\theta(h_1^{},h_2^{})
  := [\, \theta(h_1^{-1},h_2^{-1}) \,]^{-1}_{} \,\in \complexx .
  \ee
\end{itemize}

We require that any horizontal triangle having only empty circles as vertices
that is contained in a three-dimensional diagram of our interest must be taken
as a face of the simplicial decomposition. The symmetric groups $S_4$ and $S_3$ 
which consist of permutations of the vertices in \erf{Figure5} and \erf{Figure6}, 
respectively, are realized on $\tilde\omega$ and $\tilde\theta$ by
a sign that depends on the relative orientations of the two bases involved,
i.e.\ we have equalities such as
  \be
  \tilde\omega(g_1^{},g_2^{},g_3^{})
  = \tilde\omega(g_1^{-1}g_2^{-1}g_3^{-1},g_1^{},g_2^{})^{-1}_{}
  = \tilde\omega(g_3^{-1},g_2^{-1},g_1^{-1})
  \labl{rel1}
and
  \be
  \tilde\theta(h_1^{},h_2^{}) = \tilde\theta(h_1^{-1}h_2^{-1},h_1^{})
  = \tilde\theta(h_2^{-1},h_1^{-1})
  \labl{theta-ide}
etc. We require that $\tilde\omega$ and $\tilde\theta$ are normalized, i.e.
  \be
  \tilde\omega(e,g,g') = 1 \qquand  \tilde\theta(e,h) = 1 \,.
  \labl{rel3}
We will freely use the identities \erf{rel1}\,--\,\erf{rel3} below.

\medskip

A simplicial decomposition obtained this way is not unique. We therefore must
still verify that the value of the 2-cocycle on the groupoid that is obtained
by our prescription is well-defined. When no boundaries or defects (and thus
no triangular standard pieces) are involved, there are two situations to be 
dealt with: First, a gone with 5 vertices, 8 edges, 4 triangles and 1 
quadrangle. This gone can be decomposed into tetrahedra in two different 
ways; the first is a decomposition
  \eqpic{Figure7}{410}{47} {
  \put(0,0)     {\Includepic{8}
  \put(26,45.2) {\scriptsize $ g_1^{} $}
  \put(52,7.5)  {\scriptsize $ g_2^{} $}
  \put(71,77.8) {\scriptsize $ g_4^{}g_3^{} $}
  \put(83,96.2) {\scriptsize $ g_4^{} $}
  \put(121,20.3){\scriptsize $ g_3^{} $}
  }
  \put(163,48)  {{\Large$ \rightsquigarrow $}}
  \put(200,0)   {\Includepic{7a}
  \put(53,7.5)  {\scriptsize $ g_2^{} $}
  \put(65,102.2){\scriptsize $ g_4^{}g_3^{} $}
  }
  \put(285,0)   {\Includepic{7c}
  \put(24,78.8) {\scriptsize $ g_4^{}g_3^{} $}
  \put(37,94.4) {\scriptsize $ g_4^{} $}
  \put(76,23)   {\scriptsize $ g_3^{} $}
  } }
into two tetrahedra that share a face (shaded in the picture).
The other is a decomposition is into three tetrahedra according to
  \Eqpic{Figure8}{400}{66} {
  \put(-31,8)   {\Includepic{8}
  \put(26,45.2) {\scriptsize $ g_1^{} $}
  \put(52,7.5)  {\scriptsize $ g_2^{} $}
  \put(83,96.2) {\scriptsize $ g_4^{} $}
  \put(121,20.3){\scriptsize $ g_3^{} $}
  }
  \put(131,57)  {{\Large$ \rightsquigarrow $}}
  \put(158,34)  {\Includepic{8b}
  \put(53,7.2)  {\scriptsize $ g_2^{} $}
  \put(84,95.4) {\scriptsize $ g_4^{} $}
  \put(105,35.8){\scriptsize $ g_3^{}g_2^{} $}
  \put(121,20)  {\scriptsize $ g_3^{} $}
  }
  \put(261,-15) {\Includepic{8a}
  \put(24,56.7) {\scriptsize $ g_1^{} $}
  \put(53,7.2)  {\scriptsize $ g_2^{} $}
  \put(84,35.7) {\scriptsize $ g_3^{}g_2^{} $}
  \put(122,20.4){\scriptsize $ g_3^{} $}
  }
  \put(300,54)  {\Includepic{8c}
  \put(23,15.7) {\scriptsize $ g_1^{} $}
  \put(84,57.2) {\scriptsize $ g_4^{} $}
  \put(71,20.2) {\begin{turn}{-11}\scriptsize$ g_3^{}g_2^{}g_1^{} $ \end{turn}}
  \put(86,-3.8) {\scriptsize $ g_3^{}g_2^{} $}
  } }
i.e.\ the three tetrahedra share an edge (the one labeled by $g_3^{}g_2^{}$)
which intersects transversally the shaded face in \erf{Figure7} and pairwise 
share one of three faces which have the shared edge as a boundary segment. 

The two decompositions are related by a 3-2 Pachner move. As is well known, 
invariance under this move is guaranteed by the closedness
of $\omega$. Indeed we have

\begin{lem}
The groupoid cocycles obtained from the two decompositions \erf{Figure7} and
\erf{Figure8} coincide.
\end{lem}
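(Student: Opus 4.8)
The plan is to show that the two simplicial decompositions of the same "gone" assign the same product of the numbers $\tilde\omega(\cdots)$, and that this equality is exactly the statement that $\omega$ is a closed $3$-cocycle. Since the two decompositions are related by a $3$-$2$ Pachner move, it suffices to compare the single tetrahedron (from \erf{Figure7}) against the three tetrahedra (from \erf{Figure8}) that fill the same solid region, because every other tetrahedron in the full three-manifold of \erf{Figure4} is shared by both decompositions and hence contributes the same factor on both sides.

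\medskip

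First I would read off, from each decomposition, the five edge-labels $g_1,g_2,g_3,g_4$ together with the composite edges forced by the holonomy condition, so that each tetrahedron's four "principal" edges are expressed as consecutive partial products of group elements. Following the assignment rule, a tetrahedron with principal edges labeled by $(a,b,c)$ contributes the factor $\tilde\omega(a,b,c) \eq \omega(a^{-1},b^{-1},c^{-1})$. The decomposition \erf{Figure7} into two tetrahedra sharing a face gives a product of two such factors, one involving $(g_2,g_4g_3,\dots)$-type data across the shaded face and the other the complementary tetrahedron; the decomposition \erf{Figure8} into three tetrahedra sharing the edge labeled $g_3g_2$ gives a product of three factors. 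Writing both products explicitly in terms of $g_1,g_2,g_3,g_4$ and their partial products, the desired identity becomes a single equation among four values of $\tilde\omega$.

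\medskip

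The key step is then to recognize this equation as the pentagon/cocycle identity for $\omega$. Concretely, the $3$-cocycle condition $\rmd\omega \eq 0$ reads
  \be
  \omega(x_2,x_3,x_4)\,\omega(x_1,x_2x_3,x_4)\,\omega(x_1,x_2,x_3)
  = \omega(x_1x_2,x_3,x_4)\,\omega(x_1,x_2,x_3x_4) \,,
  \ee
and after substituting $x_i \eq g_i^{-1}$ (in the appropriate order) and using the symmetry relations \erf{rel1} together with the normalization \erf{rel3}, the left-hand triple product matches the three-tetrahedron decomposition and the right-hand double product matches the two-tetrahedron decomposition, or vice versa. Thus the two groupoid cocycle values coincide precisely because $\omega$ is closed.

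\medskip

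The main obstacle I expect is bookkeeping of orientations: each tetrahedron enters the product with a sign of its contribution determined by the induced orientation, and the relative orientations of the shared face in \erf{Figure7} and the shared edge in \erf{Figure8} must be tracked carefully so that the four factors of $\tilde\omega$ line up with the five factors of the raw cocycle identity with the correct exponents $\pm1$. The symmetry relations \erf{rel1} are exactly the tool for reconciling a tetrahedron appearing with reversed orientation (hence an inverted or permuted argument list) with its canonical form, so the real content of the proof is to verify that, after applying \erf{rel1} consistently, the resulting exponent pattern is the alternating pattern of $\rmd\omega \eq 0$.
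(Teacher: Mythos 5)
Your proposal follows essentially the same route as the paper: one reads off the product of $\tilde\omega$-factors from each decomposition (two tetrahedra for \erf{Figure7}, three for \erf{Figure8}) and observes that their equality is precisely the $3$-cocycle condition $\rmd\omega\eq 0$ evaluated at $g_1^{-1},g_2^{-1},g_3^{-1},g_4^{-1}$. (One small slip: the resulting identity relates \emph{five} values of $\tilde\omega$, not four, and in the paper's conventions all five enter with exponent $+1$, so the orientation bookkeeping you anticipate turns out to be unnecessary.)
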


\begin{proof}
The decomposition \erf{Figure7} gives the number
  \be
  \tau_1 := \tilde\omega(g_1,g_2,g_4 g_3)\cdot \tilde\omega(g_2 g_1,g_3,g_4) \,,
  \labl{Xi1}
while the decomposition \erf{Figure8} yields
  \be
  \tau_2 := \tilde\omega(g_1,g_2,g_3)\cdot \tilde\omega(g_2,g_3,g_4) \cdot
  \tilde\omega(g_1,g_3g_2,g_4) \,,
  \labl{Xi2}
with the three factors being the contributions from the lower, the front, 
and the back tetrahedron, respectively. Equality of 
$\tau_1$ and $\tau_2$ amounts to
  \be
  \begin{array}{l}
  \omega(g_1^{-1},g_2^{-1},g_3^{-1} g_4^{-1})
    \cdot \omega(g_1^{-1}g_2^{-1},g_3^{-1},g_4^{-1})
  \Nxl3 
  \hsp 8
  =\omega(g_1^{-1},g_2^{-1},g_3^{-1}) \cdot \omega(g_2^{-1},g_3^{-1},g_4^{-1}) \cdot
  \omega(g_1^{-1},g_2^{-1} g_3^{-1},g_4^{-1}) 
  \,.
  \end{array}
  \labl{oo=ooo}
This is nothing but the statement that $\omega$ is closed, and is thus
indeed satisfied.
\end{proof}
 
The second situation to be analyzed corresponds to a 4-1 Pachner move. It can
be treated in an analogous manner as the 3-2 move; we leave the details to 
the reader.

\medskip

Let us briefly comment on the particular case of the circle without insertions.
According to Section \ref{ss3.1}, in this case the action groupoid is 
$G\srrad G$ with the adjoint action. This situation is described by the simplex
  \eqpic{Figure9}{120}{33} {
  \put(0,0)   {\Includepic{9}
  \put(-5,33)   {\scriptsize $ \gamma $}
  \put(41,49.2) {\scriptsize $ g_1^{} $}
  \put(42,-3.5) {\scriptsize $ g_1^{} $}
  \put(69.3,36) {\scriptsize $ {}^{g_1^{}\!}\gamma $}
  \put(76,66.6) {\scriptsize $ g_2^{} $}
  \put(104,16)  {\scriptsize $ g_2^{} $}
  \put(134,65)  {\scriptsize $ {}^{g_2^{}g_1^{}\!}\gamma $}
  } }
where we indicate the adjoint left action by a superscript,
${}^g\gamma \eq g\gamma g^{-1}$. This yields the cocycle
  \be
  \tau(g_1,g_2;\gamma) = \tilde\omega(g_1,g_2,{}^{g_2^{}g_1^{}\!}\gamma) \,
  \tilde\omega(g_1,{}^{g_2^{}\!}\gamma,g_2)^{-1}\, \tilde\omega(\gamma,g_1,g_2) \,.
  \labl{325}
This way we precisely recover the argument given in \cite{willS} 
that leads to the 2-cochain found in \cite[(3.2.5)]{dipR}.
Our formalism thus produces the correct category of bulk Wilson lines.

\medskip

We next consider the case of an interval with no marked interior points. The 
interior is labeled by a finite group $G$ and $\omega\iN Z^3(G;\complexx)$, 
while the end points are labeled by group homomorphisms
$\iota\colon H_i \To G$ and by 2-cochains $\theta_i\iN C^2(H_i,\complexx)$ 
such that $\rmd\theta_i \eq \iota_i^*\omega$.

Again there is the issue of non-uniqueness of simplicial decomposition,
with the new aspect that the boundary of the interval leads to the 
presence of triangles of the form \erf{Figure6} in the decompositions.
Thus we must consider tetragons
  \eqpic{Figure10o}{120}{18} {
  \put(0,0)     {\Includepic{10}
  \put(0,20)    {\scriptsize $ h_1^{} $}
  \put(57,-5.3) {\scriptsize $ h_2^{} $}
  \put(122,22)  {\scriptsize $ h_3^{} $}
  } }

Such a boundary tetragon can be decomposed into triangles in two different ways: as
  \eqpic{Figure10}{380}{19} {
  \put(0,0)     {\Includepic{10a}
  \put(0,20)    {\scriptsize $ h_1^{} $}
  \put(57,-5.3) {\scriptsize $ h_2^{} $}
  \put(94.4,32) {\begin{turn}{18.4}\scriptsize $ h_3^{} h_2^{} $\end{turn}}
  \put(93.7,52.5) {\begin{turn}{-6.5}\scriptsize $ h_3^{} h_2^{} h_1^{} $\end{turn}}
  \put(122,22)  {\scriptsize $ h_3^{} $}
  }
  \put(173,20)  {and as}
  \put(232,0)   {\Includepic{10b}
  \put(0,20)    {\scriptsize $ h_1^{} $}
  \put(47.1,29) {\begin{turn}{-34.5}\scriptsize $ h_2^{} h_1^{} $\end{turn}}
  \put(57,-5.3) {\scriptsize $ h_2^{} $}
  \put(93.7,52.5) {\begin{turn}{-6.5}\scriptsize $ h_3^{} h_2^{} h_1^{} $\end{turn}}
  \put(122,22)  {\scriptsize $ h_3^{} $}
  } }

We compare these two decompositions by continuing the situation 
to the interior of the interval. This leads to the two
simplicial decompositions
  \Eqpic{Figure11Figure12}{410}{66} {
  \put(-15,-14) {\Includepic{11}
  \put(-5.1,126){\scriptsize $ \gamma $}
  \put(37,9.8)  {\scriptsize $ g_1^{} $}
  \put(45,121.7){\scriptsize $ h_1^{} $}
  \put(121,128.8){\begin{turn}{18}\scriptsize $ h_3^{}h_2^{} $ \end{turn}}
  \put(102,-4.2){\scriptsize $ g_2^{} $}
  \put(104,95.3){\scriptsize $ h_2^{} $}
  \put(139,105.8){\scriptsize $ h_3^{} $}
  \put(169,24.9){\scriptsize $ g_3^{} $}
  }
  \put(201,65)  {and}
  \put(248,-14) {\Includepic{12}
  \put(-5.1,126){\scriptsize $ \gamma $}
  \put(37,9.8)  {\scriptsize $ g_1^{} $}
  \put(45,121.7){\scriptsize $ h_1^{} $}
  \put(83,127.7){\begin{turn}{-24}\scriptsize $ h_2^{}h_1^{} $ \end{turn}}
  \put(102,-4.2){\scriptsize $ g_2^{} $}
  \put(73,109.2){\scriptsize $ h_2^{} $}
  \put(152,113.2){\scriptsize $ h_3^{} $}
  \put(169,24.9){\scriptsize $ g_3^{} $}
  } }
respectively, each consisting of six tetrahedra and of two triangles at the top. 

\begin{prop}
The complex numbers
obtained from the two decompositions in 
\erf{Figure11Figure12} coincide.
\end{prop}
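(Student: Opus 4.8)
The plan is to evaluate the complex number attached to each of the two three-manifolds in \erf{Figure11Figure12} as the product of the contributions of its pieces, and then to show that the two products agree. Each decomposition consists of six tetrahedra and two boundary triangles, so each value is a product of six factors $\tilde\omega$ (one per tetrahedron, evaluated on the triple of $G$-labels that is fixed by the holonomy condition) and two factors $\tilde\theta$ (one per top triangle, evaluated on the $H$-labels $h_1,h_2,h_3$). First I would record the two boundary-triangle products explicitly: the two triangulations of the tetragon in \erf{Figure10} differ by a $2$--$2$ move, so one decomposition contributes a product of $\tilde\theta$-values built from $(h_2,h_3)$ and $(h_3 h_2,h_1)$, while the other contributes values built from $(h_1,h_2)$ and $(h_2 h_1,h_3)$.

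The key observation is that the two three-manifolds coincide outside a localized neighbourhood of the moved boundary diagonal. Away from the boundary, passing from one tetrahedralization to the other is accomplished by interior $3$--$2$ (and, where necessary, $4$--$1$) Pachner moves, whose invariance is exactly the closedness identity \erf{oo=ooo} established in the preceding Lemma. After cancelling all factors that are matched in this way, the discrepancy between the two products is concentrated at the boundary and reduces to a single residual tetrahedron, all of whose top edges lie on $\partial\Sigma$ and are therefore labelled by the images $\iota(h_1),\iota(h_2),\iota(h_3)$. Thus the remaining identity to be checked is that the ratio of the two boundary $\tilde\theta$-products equals the $\tilde\omega$-value of this residual tetrahedron.

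This last identity is precisely the compatibility condition $\rmd\theta \eq \iota^*\omega$ that the Lagrangian datum $\theta$ satisfies by hypothesis. Unwinding the definitions $\tilde\theta(h_1,h_2)\eq\theta(h_1^{-1},h_2^{-1})^{-1}$ and $\tilde\omega(g_1,g_2,g_3)\eq\omega(g_1^{-1},g_2^{-1},g_3^{-1})$, the ratio of the two boundary $\tilde\theta$-products is a value of the coboundary $\rmd\theta$ (on the inverted arguments dictated by the sign rules \erf{theta-ide}), which by $\rmd\theta \eq \iota^*\omega$ equals $\omega$ evaluated on the $\iota$-images of $h_1,h_2,h_3$ with the same inversion convention, hence exactly the residual $\tilde\omega$-factor. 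I expect the main obstacle to be purely combinatorial bookkeeping: matching the six tetrahedra of the two decompositions, tracking which $G$-labels are fixed by the holonomy condition, and fixing every orientation and inverse consistently so that the leftover bulk factor carries precisely the arguments and exponent demanded by $\rmd\theta \eq \iota^*\omega$. Once the orientations are pinned down, the equality is forced by closedness of $\omega$ together with this single boundary relation.
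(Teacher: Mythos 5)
Your proposal is correct and follows essentially the same route as the paper: identify that all but the top-attached pieces of the two decompositions agree, absorb the bulk discrepancy using the closedness of $\omega$ (the 3--2 move identity \erf{oo=ooo}), and reduce the remaining boundary comparison to the single relation $\rmd\theta \eq \iota^*\omega$ evaluated on $h_1,h_2,h_3$. The paper simply carries out the bookkeeping you defer, writing the two differing tetrahedron pairs explicitly as $\tilde\omega(\gamma,\iota(h_1),\iota(h_2))\,\tilde\omega(\gamma,\iota(h_2h_1),\iota(h_3))$ versus $\tilde\omega(\gamma,\iota(h_1),\iota(h_3h_2))\,\tilde\omega(\iota(h_1)\gamma,\iota(h_2),\iota(h_3))$ and extracting the residual factor $\tilde\omega(\iota(h_1),\iota(h_2),\iota(h_3))$, exactly as you predict.
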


\begin{proof}
Of the six tetrahedra appearing in the two simplices \erf{Figure11Figure12},
only two are different: the ones attached to the top. The simplex on the right 
hand side of \erf{Figure11Figure12} gives factors $\tilde\theta(h_1,h_2)$ and 
$\tilde\theta(h_2h_1,h_3)$ from the triangles at the top and
  \be
  \tilde\omega(\gamma,\iota(h_1),\iota(h_2)) \cdot 
  \tilde\omega(\gamma,\iota(h_2 h_1),\iota(h_3)) \,.
  \ee
from the two tetrahedra attached to the top triangle, while for the simplex 
on the left hand side we get $\tilde\theta(h_2,h_3) \, \tilde\theta(h_1,h_3h_2)$
from the top triangles and 
  \be
  \tilde\omega(\gamma,\iota(h_1),\iota(h_3 h_2)) \cdot 
  \tilde\omega(\iota(h_1)\gamma,\iota(h_2),\iota(h_3))
  \ee
from the attached tetrahedra. Equality of the two expressions yields, after
implementing the closedness \erf{oo=ooo} of $\omega$,
  \be
  \tilde\theta(h_1,h_2) \, \tilde\theta(h_2h_1,h_3)
  = \tilde\theta(h_2,h_3) \, \tilde\theta(h_1,h_3h_2) \,
  \tilde\omega(\iota(h_1),\iota(h_2),\iota(h_3)) \,,
  \ee
or, what is the same
  \be
  \mathrm d\theta(h_1^{-1},h_2^{-1},h_3^{-1})
  = \omega(\iota(h_1^{-1}),\iota(h_2^{-1}),\iota(h_3^{-1})) \,. 
  \ee
This indeed holds true, owing to $\rmd \theta \eq \iota^*\omega$.
\end{proof}

\subsection{Wilson line categories for the interval}\label{ss:int}

As already pointed out, by invoking fusion of defects (and of defects to
boundaries), among the one-dimensional manifolds there are two fundamental 
building blocks, the interval without interior marked points and the circle
with a single marked point. We now turn to the computation of the categories 
for these building blocks 
and then compare them to the model-independent results of \cite{fusV}.
In the present subsection we consider an interval 
without interior marked points. The interior is
labeled by $(G,\omega)$ with $G$ a finite group and $\omega$ a
3-cocycle. For the two boundary points we have group homomorphisms
$\iota_i\colon H_i\to G$ and 2-cochains $\theta_i$ on $H_i$ such that
$\iota^*_i\omega \eq \rmd\theta_i$ for $i \eq 1,2$. 

Before computing the linearization, we outline what the general formalism of
\cite{fusV} predicts for the situation at hand: The data associated to a 
boundary leads to module categories $\calm_i$ over the fusion category 
$\GVect^\omega$. Such a module category can be decomposed into indecomposable 
module categories. As described in Section \ref{ss:mod}, an indecomposable 
module category over $\GVect^\omega$ can, in turn,
be concretely described \cite{ostr} as the category of modules over
an algebra in $\GVect^\omega$. Thus for the description
of $\calm_i$ it suffices to know such an algebra $A_{H,\theta}$ for any
subgroup $H\le G$ and 2-cochain $\theta$ on $H$ satisfying
$\rmd \theta \eq \omega|_H$. As seen in Section \ref{ss:mod}. such 
algebras can be described as follows. Isomorphism classes of simple objects
in $\GVect^\omega$ are in bijection with elements $g\iN G$; we fix
a set of representatives $(U_g)_{g\in G}$. Then $A_{H,\theta}$ is the object
$\bigoplus_{h\in H\!} U_h$ endowed with the multiplication morphism that is
furnished by the cochain $\theta$. This multiplication is associative, due to 
the relation $\rmd\theta \eq \iota^*\omega$. Then the category
$\calm_{H,\theta} \,{:=}\, A_{H,\theta}\Mod$ is a right module category
over $\GVect^\omega$.

By the results of \cite{fusV}, such a module category corresponds
to an indecomposable boundary condition of the Dijkgraaf-Witten theory 
based on $(G,\omega)$. Given two such boundary conditions, 
consider the abelian category
  \be
  \calf := \Fun_{\GVect^\omega}(A_{H_2,\theta_2}\Mod,A_{H_1,\theta_1}\Mod)
  \ee
of module functors. It has the following physical interpretation:
Objects of \calf\ label boundary Wilson lines separating the boundary condition
$\calm_{H_1,\theta_1}$ from $\calm_{H_2,\theta_2}$. Morphisms of \calf\
label point-like insertions on such Wilson lines. \calf\ can be described as the
category of $A_{H_1,\theta_1}$-$A_{H_2,\theta_2}$-bimodules in $\GVect^\omega$.

The objects $M \eq \bigoplus_{g\in G}M_g$ of the category of 
$A_{H_1,\theta_1}$-$A_{H_2,\theta_2}$-bimodules have been described explicitly 
in \cite[Prop.\,3.2]{ostr5}: Taking into account that the tensor product on
$\GVect^\omega$ realizes the group law strictly, i.e.\ $U_h\oti U_g \eq U_{hg}$,
the restriction of the left action of $A_{H_1,\theta_1}$ on
$M$ to $U_{h_1}\oti U_g$ leads to an endomorphism of $U_{h_1g}$ which is a 
multiple $\rho(h_1,g) \iN \complex$ of the identity. Analogously the right 
action of $A_{H_2,\theta_2}$ gives us scalars $\ohr(g,h_2)\iN\complex$.
These scalars obey the following conditions.
 \beginitemize
 \item
That we have a left $A_{H_1,\theta_1}$-action amounts to the relation
  \be
  \rho(h_1' h_1^{},g)
  = \theta_1^{}(h_1',h_1^{})^{-1}_{} \, \omega(h_1',h_1^{},g) \,
  \rho(h_1,g) \, \rho(h_1',h_1^{}g)
  \labl{rhoohr1}
for all $g\iN G$ and all $h_1^{},h_1'\iN H_1^{}$.

 \item
Similarly the right $A_{H_2,\theta_2}$-action gives
  \be
  \ohr(g,h_2^{} h_2')
  = \theta_2^{}(h_2^{},h_2')^{-1}_{} \, \omega(g,h_2^{},h_2')^{-1}_{} \,
  \ohr(g,h_2^{}) \, \ohr(gh_2^{},h_2')
  \labl{rhoohr2}
for all $g\iN G$ and all $h_2^{},h_2'\iN H_2^{}$.

 \item
The condition that left and right actions commute amounts to
  \be
  \rho(h_1,g) \, \ohr(h_1 g,h_2)
  = \omega(h_1,g,h_2) \,\, \ohr(g,h_2) \, \rho(h_1,g h_2)
  \labl{rhoohr3}
for all $g\iN G$, $h_1\iN H_1$ and $h_2\iN H_2$. 

 \item
Finally the unitality of the actions implies the two constraints
  \be
  \rho(e,g) = 1 = \ohr(g,e) 
  \labl{rhoohr4}
for all $g\in G$.
 \end{itemize}
(Note that $\theta_1$ and $\theta_2$ are normalized because the algebras are 
strictly unital; \erf{rhoohr4} corresponds to $\omega$ being normalized as
well.) The objects in the category \calf\ of 
$A_{H_1,\theta_1}$-$A_{H_2,\theta_2}$-bimodules  are thus $G$-graded vector 
spaces together with two functions $\rho$ and $\ohr$ 
that obey the constraints \erf{rhoohr1}\,-\,\erf{rhoohr4}. Morphisms of 
\calf\ are $G$-homogeneous maps, commuting with the actions.

We may also consider, for given $\gamma \iN G$, the group
  \be
  H_\gamma := \{ (h_1,h_2) \iN H_1\Times H_2 \,|\, h_1\gamma \eq \gamma h_2 \} \,.
  \ee
We can identify $H_\gamma$ with a subgroup of $H_1$, which in turn is a subgroup 
of $G$. Then $h \iN H_\gamma$ acts on the homogeneous component  $M_\gamma$ of $M$ 
as a scalar multiple
  \be
  \rog(h) := \rho(h,\gamma)\, \ohr(\gamma,\gamma^{-1}h\gamma)^{-1}_{}
  \ee
of the identity. In view of \erf{rhoohr1}\,-\,\erf{rhoohr3} this gives rise to
a 2-cocycle $\thetagamma$ on $H_\gamma$, given by
  \be
  \bearll
  \thetagamma(h,h') \!\!& := \rog(hh')^{-1}_{} \, \rog(h) \, \rog(h')
  \Nxl3&
  = \rho(hh',\gamma)^{-1}_{} \rho(h,\gamma)\, \rho(h',\gamma)\,\, 
  \ohr(\gamma,\gamma^{-1}hh'\gamma)\, \ohr(\gamma,\gamma^{-1}h\gamma)^{-1}_{}\,
  \ohr(\gamma,\gamma^{-1}h'\gamma)^{-1}_{}
  \Nxl3&
  = \theta_1(h,h') \, \theta_2(\gamma^{-1}h\gamma,\gamma^{-1}h'\gamma)^{-1}_{}\,
  \omega(h,h',\gamma)^{-1}_{}\,
  \omega^{-1}_{}(\gamma,\gamma^{-1}h\gamma,\gamma^{-1}h'\gamma)
  \Nxl2& \hsp9
  \rho(h,\gamma)\, \rho(h,h'\gamma)^{-1}_{}\,
  \ohr(\gamma,\gamma^{-1}h'\gamma)^{-1}_{}\, \ohr(h\gamma,\gamma^{-1}h'\gamma)
  \Nxl3&
  = \theta_1(h,h') \, \theta_2(\gamma^{-1}h'{}^{-1}\gamma,\gamma^{-1}h^{-1}\gamma)\,
  \Nxl2& \hsp9
  \omega(h,h',\gamma)^{-1}_{}\,
  \omega(\gamma,\gamma^{-1}h\gamma,\gamma^{-1}h'\gamma)^{-1}_{}\,
  \omega(h,\gamma,\gamma^{-1}h'\gamma)
  \eear
  \ee
(compare formula (3.1) of \cite{ostr5}). Here in the third equality we invoke 
\erf{rhoohr1} and \erf{rhoohr2}, while the last equality uses \erf{rhoohr3}.

\medskip

We now show that the prescription \erf{G-GG-H1H2} indeed produces
the expected result:

\begin{prop}
Consider the groupoid 
$\Gamma \eq G\sll G\Times G \srr_{\!\iota_1^-\times\iota_2^-} H_1{\times}H_2$
that according to formula \erf{G-GG-H1H2} is assigned to the interval without 
interior marked points. If the group homomorphisms $\iota_i\colon H_i \To G$ 
are subgroup embeddings, then the category that is obtained by the projective 
linearization of $\Gamma$ for the Lagrangian data $\theta_1$, $\theta_2$ and 
$\omega$ is equivalent, as a $\complex$-linear abelian category,
to the category of $A_{H_1,\theta_1}$-$A_{H_2,\theta_2}$-bimodules,
  \be
  [\,G\sll G\Times G\srr_{\!\iota_1^-\times\iota_2^-} H_1{\times}H_2\,,\Vectc\,]
  ^{\theta_1,\theta_2,\omega}
  \,\simeq\, A_{H_1,\theta_1} \mbox- A_{H_2,\theta_2} \mbox- \mathrm{Bimod}
  _{\GVect^\omega} \,.
  \ee
\end{prop}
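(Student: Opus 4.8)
The plan is to exhibit the equivalence by first simplifying the groupoid $\Gamma$ and then matching the resulting data termwise with the explicit description of bimodules recalled above. The key structural observation is that the left diagonal $G$-action on $G\Times G$ appearing in \erf{G-GG-H1H2} is free: if $(g\gamma_1,g\gamma_2)\eq(\gamma_1,\gamma_2)$ then $g\eq e$. Consequently $\Gamma$ is equivalent, as a groupoid, to the action groupoid $\Gamma'$ of $H_1\Times H_2$ on $G$ in which $(h_1,h_2)$ acts by $\gamma\mapsto\iota_1(h_1)\,\gamma\,\iota_2(h_2)^{-1}$; an explicit equivalence is given by the slice $\gamma\mapsto(e,\gamma)$, where the compensating element $g\eq\iota_1(h_1)$ is used to return to the slice. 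Since the $\tau$-twisted linearization depends, up to $\complex$-linear equivalence, only on the groupoid up to equivalence and on the transported cocycle up to coboundary, I may work with $\Gamma'$ throughout. Here injectivity of the $\iota_i$ enters in two ways: it guarantees that the slice morphisms are labelled faithfully by $H_1\Times H_2$, and it ensures $\iota_i(H_i)\le G$, so that the algebras $A_{H_i,\theta_i}\eq\bigoplus_{h\in H_i}U_h$ are defined inside $\GVect^\omega$.

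Next I would define the comparison functor on objects. Given an object $(\{V_{(\gamma_1,\gamma_2)}\},\{\rho\})$ of the twisted linearization, set $M_\gamma\,{:=}\,V_{(e,\gamma)}$, so that $M\eq\bigoplus_{\gamma\in G}M_\gamma$ is a $G$-graded vector space. The residual action of $H_1\Times\{e\}$ on the slice is left multiplication $\gamma\mapsto\iota_1(h_1)\gamma$ and supplies the maps $M_\gamma\To M_{\iota_1(h_1)\gamma}$ constituting the left $A_{H_1,\theta_1}$-action, whose coefficients I read off as $\rho(h_1,\gamma)$; symmetrically the action of $\{e\}\Times H_2$ is $\gamma\mapsto\gamma\iota_2(h_2)^{-1}$ and yields the right $A_{H_2,\theta_2}$-action with coefficients $\ohr(\gamma,h_2)$, the inverse in the $\iota_2^-$-action accounting for the passage from a right to a left action. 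On morphisms the functor is essentially the identity: a morphism of the twisted linearization is a $G$-homogeneous family of maps intertwining the projective actions, which under the slice identification is precisely a $G$-homogeneous map commuting with the left and right actions, i.e.\ a bimodule morphism. The functor is thus manifestly $\complex$-linear and, once objects are shown to correspond, fully faithful.

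The crux of the argument is to verify that the projectivity constraints carried by $\tau$ reproduce exactly the relations \erf{rhoohr1}\,--\,\erf{rhoohr4}. For this I would use the graphical computation of the interval cocycle from Section \ref{ss:gcc}: the three-manifold associated to the interval without interior points decomposes into tetrahedra, each contributing a factor $\tilde\omega$, and into boundary triangles over the two end points, each contributing a factor $\tilde\theta$ built from the respective cochain $\theta_1$ or $\theta_2$. Specialising two composable morphisms to $H_1\Times\{e\}$ yields, after the identities \erf{rel1}\,--\,\erf{rel3}, precisely the factor $\theta_1(h_1',h_1)^{-1}\,\omega(h_1',h_1,g)$ of \erf{rhoohr1}; specialising to $\{e\}\Times H_2$ gives $\theta_2(h_2,h_2')^{-1}\,\omega(g,h_2,h_2')^{-1}$ of \erf{rhoohr2}; and the mixed case, one morphism in each $H_i$, produces the single tetrahedron factor $\omega(h_1,g,h_2)$ of the commutativity relation \erf{rhoohr3}, while the normalisation \erf{rel3} gives the unit constraints \erf{rhoohr4}. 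This cocycle bookkeeping is the main obstacle, as it requires a careful choice of simplicial decomposition matching the ordering conventions for $\rho$ and $\ohr$ of \cite{ostr5}; the invariance under the Pachner moves established in the Lemma and Proposition preceding this statement guarantees that the outcome is independent of that choice.

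Finally I would conclude. The computation above shows that an object of the twisted linearization, viewed through the slice, is exactly a $G$-graded vector space equipped with functions $\rho$ and $\ohr$ subject to \erf{rhoohr1}\,--\,\erf{rhoohr4}, which is the explicit description of an $A_{H_1,\theta_1}$-$A_{H_2,\theta_2}$-bimodule in $\GVect^\omega$; reversing the construction produces the required preimage, so the functor is essentially surjective. Together with the identification of morphisms this gives a $\complex$-linear, fully faithful and essentially surjective functor, hence an equivalence of abelian categories, as claimed.
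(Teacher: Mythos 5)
Your proposal is correct and follows essentially the same route as the paper: both pass from $\Gamma$ to the equivalent action groupoid $H_1\sll G\srr_{\!\iota_2^-}H_2$ (your slice $\gamma\mapsto(e,\gamma)$ is the inverse of the paper's functor $(\gamma_1,\gamma_2)\mapsto\gamma_1^{-1}\gamma_2$), compute the 2-cocycle from a chosen simplicial decomposition of the slice-of-pie diagram, and verify the bimodule axioms \erf{rhoohr1}--\erf{rhoohr4} by specialising to the three cases $h_2^{}\eq h_2'\eq e$, $h_1^{}\eq h_1'\eq e$ and $h_1'\eq h_2'\eq e$, with the same inversion of $H_2$-elements to convert the left groupoid action into the right module action.
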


\begin{proof}
The objects of the groupoid in question are pairs $(\gamma_1,\gamma_2)$ of
elements of $G$; they label the vertical edges in the following figure:
  \eqpic{FigureA1}{120}{57} {
  \put(0,0)     {\Includepic{A1}
  \put(-7.5,71) {\scriptsize $ \gamma_1^{} $}
  \put(-8,29)   {\scriptsize $ \gamma_2^{} $}
  \put(29,-6.3) {\scriptsize $ \iota_2^{}(h_2^{}) $}
  \put(28,87.7) {\scriptsize $ \iota_1^{}(h_1^{}) $}
  \put(44,43)   {\scriptsize $ g $}
  \put(97,13.5) {\begin{turn}{32.5}\scriptsize $ \iota_2^{}(h_2') $\end{turn}}
  \put(97,107.5){\begin{turn}{32.5}\scriptsize $ \iota_1^{}(h_1') $\end{turn}}
  \put(108,69)  {\scriptsize $ g' $}
  } }
Morphisms are gauge transformations in $H_1$, $H_2$ and in $G$ -- labeling 
horizontal edges that connect empty circles and filled circles in 
\erf{FigureA1}, respectively. Again we consider a pair of compatible morphisms 
leading to horizontal edges forming the shape of a triangle
to get the relevant 2-cocycle on the groupoid $\Gamma$.
In the sequel we suppress the embedding homomorphisms $\iota_1$ and $\iota_2$.
\\
Observe that the functor 
  \be
  G\sll G\Times G \srr_{\!\iota_1^-\times\iota_2^-} H_1 \Times H_2
  \,\longrightarrow\,
  H_1 \,{} _{\iota_1^{\phantom-}} \hspace{-5.5pt} \sll G \srr_{\!\iota_2^-} H_2
  \ee
that is defined on objects by 
$(\gamma_1^{},\gamma_2^{}) \,{\mapsto}\, \gamma_1^{-1} \gamma_2^{}$
is actually an equivalence of groupoids. 
Accordingly we set $\gamma \,{:=}\, \gamma_1^{-1}\gamma_2^{}$ and obtain from
\erf{FigureA1} a number $\tau(\gamma;h_1^{},h_1';h_2^{},h_2')$ that can be read 
off from the following slice of pie:
  \eqpic{FigureA2-0}{120}{38} {
  \put(0,0)     {\Includepic{A20}
  \put(-6,29)   {\scriptsize $ \gamma $}
  \put(41,-6)   {\scriptsize $ h_2^{} $}
  \put(41,47)   {\scriptsize $ h_1^{} $}
  \put(109,20)  {\scriptsize $ h_2' $}
  \put(109,73)  {\scriptsize $ h_1' $}
  } }
where $\gamma\iN G$, $h_1^{},h_1 \iN H_1$ and $h_2^{},h_2' \iN H_2$.
There are many equivalent ways to express the so defined numbers in terms
of the 2-cocycles $\theta_i$ and the 3-cocycle $\omega$; they are related 
by the various properties of $\theta_i$ and $\omega$.
Let us choose one such expression that corresponds to the decomposition
  \eqpic{FigureA2}{120}{36} {
  \put(0,0)     {\Includepic{A2}
  \put(-5,34)   {\scriptsize $ \gamma $}
  \put(44,-6)   {\scriptsize $ h_2^{} $}
  \put(46,58.4) {\scriptsize $ h_1^{} $}
  \put(103,13)  {\scriptsize $ h_2' $}
  \put(103,66)  {\scriptsize $ h_1' $}
  } }
of the slice \erf{FigureA2-0} into three tetrahedra. This yields
  \be
  \bearll
  \tau(\gamma;h_1^{},h_1';h_2^{},h_2') \!\!&
  = \tilde\theta_1^{}(h_1^{},h_1')\, {\tilde\theta_2^{}(h_2^{},h_2')} \,
  \tilde\omega(h_1^{},h_1',h_2'h_2^{}\gamma^{-1}h_1^{-1}h_1'{}^{-1}) \,
  \Nxl3 & \hsp5
  \tilde\omega(h_2^{},h_2',h_1^{}\gamma h_2^{-1}h_2'{}^{-1}) \,
  \tilde\omega(\gamma,h_1^{},h_2'h_2^{}\gamma^{-1}h_1^{-1}) \,.
  \eear
  \labl{deftau}
To make contact to the relations \erf{rhoohr1}\,-\,\erf{rhoohr3} for the
category of $A_{H_1,\theta_1}$-$A_{H_2,\theta_2}$-bimodules, we consider
three special cases of $\tau(\gamma;h_1^{},h_1';h_2^{},h_2')$. 
 \beginitemize
 \item
First we set $h_2^{} \eq e \eq h_2'$; then \erf{deftau} reduces to
  \be
  \bearll
  \tau(\gamma;h_1^{},h_1';e,e) \!\!&
  = \tilde\theta_1^{}(h_1'{}^{-1},h_1^{-1})\,
  \tilde\omega(h_1^{},h_1',\gamma^{-1} h_1^{-1}h_1'{}^{-1})\,
  \tilde\omega(\gamma,h_1^{},\gamma^{-1} h_1^{-1})
  \Nxl3 &
  = \tilde\theta_1^{}(h_1'{}^{-1},h_1^{-1})
  \, \tilde\omega(h_1'{}^{-1},h_1^{-1},\gamma^{-1})
  = \theta_1^{}(h_1',h_1^{})^{-1}_{} \, \omega(h_1',h_1^{},\gamma) \,.
  \eear
  \labl{spc1}
This reproduces the factor in the relation \erf{rhoohr1} for the left action
of $H_1$, with $g \eq \gamma$.

\item
Next consider the case $h_1^{} \eq e \eq h_1'$; then we get
  \be
  \bearll
  \tau(\gamma;e,e;h_2^{},h_2') \!\!& = \tilde\theta_2^{}(h_2^{},h_2')\,
  \tilde\omega(h_2^{},h_2',\gamma h_2^{-1}h_2'{}^{-1})
  \Nxl3 &
  = \tilde\theta_2^{}(h_2^{},h_2')\, \tilde\omega(\gamma^{-1},h_2^{},h_2')^{-1}_{}
  = \theta_2^{}(h_2^{-1},h_2'{}^{-1})^{-1}_{} 
  \, \omega(\gamma,h_2^{-1},h_2'{}^{-1})^{-1}_{} .
  \eear
  \labl{spc2}
This is the factor in \erf{rhoohr2}, provided we replace the group
elements $h_2^{}$ and $h_2'$ in \erf{rhoohr2} by their inverses, 
which is precisely what is needed to turn the right action of $H_2$
in \erf{rhoohr2} to the left action considered here.

\item
Finally take $h_1' \eq e \eq h_2'$. This results in
  \be
  \tau(\gamma;h_1^{},e,h_2^{},e)
  = \tilde\omega(\gamma,h_1^{},h_2^{}\gamma^{-1}h_1^{-1})
  = \tilde\omega(h_1^{-1},\gamma^{-1},h_2^{})
  = \omega(h_1^{},\gamma,h_2^{-1}) \,,
  \labl{spc3}
thus reproducing the factor appearing in the bimodule relation \erf{rhoohr3} 
(again upon putting $g \eq \gamma$ and inverting $h_2$).
 \end{itemize}
~\\[-3em] \end{proof}

Notice that the number $\tilde\omega(h_1'{}^{-1},h_1^{-1},\gamma^{-1})$
appearing in the expression \erf{spc1} corresponds to a tetrahedron that
can be viewed as the degeneration of the slice \erf{FigureA2-0} that 
results from the degeneration of its bottom triangle to a single point.
Similarly, $\tilde\omega(\gamma^{-1},h_2^{},h_2')^{-1}_{}$ in \erf{spc2} 
corresponds to the degeneration of the top triangle of \erf{FigureA2-0}
to a point. And the tetrahedron corresponding to 
$\tilde\omega(h_1^{-1},\gamma^{-1},h_2^{})$ in \erf{spc3} can be
obtained by gluing together two quadrangles along their edges which
are obtained from the slice \erf{FigureA2-0} by degenerating both
the top and the bottom triangle to a single edge.

\subsection{The transparent defect}\label{ss:tpd}

We now address aspects of categories associated to DW manifolds with the 
topology of a circle. Recall that one expects that surface defects can be 
fused and should thus form a monoidal bicategory. We refer to the monoidal 
unit of this monoidal bicategory as the \emph{transparent}, or \emph{invisible}
surface defect.  We have already mentioned in Section \ref{sec:b+defTFT} 
that in the framework of \cite{fusV} the transparent surface defect should 
correspond to the canonical Witt trivialization \erf{Wittcan}. In the present 
subsection we are interested in the Lagrangian realization of this 
distinguished surface defect.

To understand what group homomorphism and 2-cocycle 
furnish the transparent defect, we consider a circle with any
number $n$ of surface defects, one of which is transparent. By fusing all 
other surface defects to a single one, we can reduce the situation to the case 
$n \eq 2$. This situation has already been studied in Section \ref{ss3.1}; 
it leads to the groupoid \erf{GG.GGGG.HH}. To realize the transparent defect 
for one of the two marked points we must moreover set $G_> \eq G_< \,{=:}\, G$ 
and take the same 3-cocycle $\omega$ on either side. 
Now we claim that the group homomorphism for the transparent defect 
is the diagonal subgroup embedding, i.e.\ we have to set $H_+ \eq G$ with 
$\iota_+ \eq d \colon G\To G\Times G$ the diagonal embedding. This way we
arrive at the action groupoid 
  \be
  \Gamma_{\!1} := 
  G\Times G \sll G\Times G\Times G\Times G \srr_{d^-\times\iota^-} G\Times H
  \ee
which we already considered in \erf{GG-GGGG-GH}. We further claim that the 
relevant 2-cochain on $H \eq G$ is the constant 2-cochain 
$\theta_d \,{\equiv}\, 1$. Note that this is a valid cochain, as it satisfies
$\rmd \theta_d \eq 1 \eq \omega \cdo \omega^{-1}$.

To see that the defect defined by $\iota \eq d$ and $\theta \eq 1$ indeed has 
the relevant properties of the transparent defect, recall first that
in \erf{F12} we have obtained an equivalence 
$F\colon \Gamma_{\!1} \,{\stackrel\simeq\to}\, \Gamma_{\!2}$ between 
$\Gamma_{\!1}$ and the action groupoid 
  \be
  \Gamma_{\!2} := G\sll G\Times G \srr_{\!\iota^-} H
  \ee
introduced in \erf{G-GG-H}, and that the latter groupoid is precisely the 
one relevant for the circle with a single surface defect of arbitrary type.
Our prescription also yields 2-cocycles 
$\tau_1$ on $\Gamma_{\!1}$ and $\tau_2$ on $\Gamma_{\!2}$. We need to show 
that we still get an equivalence after linearization with respect to 
Lagrangian data. To this end, describe the second defect by $(H,\theta)$
with group homomorphisms $\iota_i\colon H\To G$ and a 2-cochain $\theta$ on 
$H$ satisfying $\rmd\theta \eq (\iota_1^*\omega)\,(\iota_2^*\omega)^{-1}$.
We then have

\begin{prop}
The pullback along the functor $F\colon \Gamma_{\!1}\To\Gamma_{\!2}$ described 
in \erf{F12} yields an equivalence
  \be\begin{array}{rll}
  F^*_{}:\quad [\Gamma_{\!2},\Vect]^{\tau_2}_{}
  &\!\!\stackrel\simeq\longrightarrow\!\!& [\Gamma_{\!1},\Vect]^{\tau_1}_{} \\[3pt]
  \varphi &\!\!\longmapsto\!\!& \varphi\circ F
  \end{array} \ee
of $\complex$-linear abelian categories.
\end{prop}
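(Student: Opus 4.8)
The plan is to reduce the assertion to a cohomological identity between the two groupoid 2-cocycles. The functor $F$ has already been shown, immediately after \erf{g2mor}, to be an equivalence of the underlying groupoids $\Gamma_{\!1}$ and $\Gamma_{\!2}$. For any equivalence of finite groupoids, precomposition is an equivalence of the associated functor categories; in the twisted setting the operation $\varphi\mapsto\varphi\circ F$ sends a $\tau_2$-twisted functor on $\Gamma_{\!2}$ to a functor on $\Gamma_{\!1}$ twisted by the \emph{pullback} cocycle $F^*\tau_2$, since for composable morphisms $\phi,\psi$ of $\Gamma_{\!1}$ one computes $(\varphi\circ F)(\phi\circ\psi)\eq\tau_2(F\phi,F\psi)\,(\varphi\circ F)(\phi)\,(\varphi\circ F)(\psi)$. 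Hence $F^*$ is automatically an equivalence $[\Gamma_{\!2},\Vect]^{\tau_2}\stackrel\simeq\longrightarrow[\Gamma_{\!1},\Vect]^{F^*\tau_2}$, and the proposition follows once I establish the equality $F^*\tau_2\eq\tau_1$ of $2$-cocycles on $\Gamma_{\!1}$. More generally it suffices to show that they differ by a coboundary $\rmd\lambda$; such a $\lambda$ can then be absorbed by rescaling the projective action maps, which yields an isomorphism $[\Gamma_{\!1},\Vect]^{F^*\tau_2}\stackrel\simeq\longrightarrow[\Gamma_{\!1},\Vect]^{\tau_1}$.

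To compare the cocycles I would use the graphical calculus of Section~\ref{ss:gcc}: both $\tau_1$ and $\tau_2$ are read off, as products of the standard factors $\tilde\omega$ and $\tilde\theta$, from the simplicial decomposition of the three-manifold diagram attached to a pair of composable morphisms. The decisive simplification is that the transparent defect is encoded by the diagonal embedding $d\colon G\To G\Times G$ together with the trivial $2$-cochain $\theta_d\equiv1$. Consequently every horizontal triangle located at the transparent marked point contributes $\tilde\theta_d\eq1$, and --- because the two subintervals adjacent to that point carry the \emph{same} group $G$ and the \emph{same} cocycle $\omega$ --- the tetrahedra clustered around it can be removed by a sequence of 3-2 and 4-1 Pachner moves of exactly the kind treated in the Lemma and Proposition of Section~\ref{ss:gcc}. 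Invariance of the assigned number under these moves, which rests on the closedness \erf{oo=ooo} of $\omega$ together with the defining relations $\rmd\theta\eq(\iota_1^*\omega)\,(\iota_2^*\omega)^{-1}$ and $\rmd\theta_d\eq1$ of the two defects, is precisely what guarantees that the value is unchanged when one passes from the diagram for $\Gamma_{\!1}$ to the smaller diagram for $\Gamma_{\!2}$.

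Concretely, I would choose the two simplicial decompositions compatibly: the object relabelling $F(\gamma_1,\gamma_2,\gamma_3,\gamma_4)\eq(\gamma_1\gamma_2^{-1}\gamma_3,\gamma_4)$ of \erf{F12} and the morphism relabelling \erf{g1mor}$\to$\erf{g2mor} identify each edge of the reduced $\Gamma_{\!2}$-diagram with a product of edges of the $\Gamma_{\!1}$-diagram. Substituting these identifications into the $\tilde\omega$-factors of $\tau_2$ and repeatedly applying the symmetry and normalization identities \erf{rel1}\,--\,\erf{rel3} should then reproduce, factor by factor, the expression for $\tau_1$ with the transparent contributions already collapsed to $1$. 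This verifies $F^*\tau_2\eq\tau_1$ up to the expected coboundary and completes the reduction begun in the first paragraph.

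The main obstacle is the bookkeeping in this cocycle computation: tracking the many group-element arguments of $\tilde\omega$ as they are pushed through $F$ and through the Pachner moves. The conceptual content is already contained in the Pachner invariance proved above; what remains is to organise the two decompositions so that the transparent vertex is eliminated in a controlled way and the surviving factors line up with those of $\tau_1$. I would therefore spend most of the argument fixing convenient decompositions and invoking \erf{oo=ooo} together with $\rmd\theta\eq(\iota_1^*\omega)\,(\iota_2^*\omega)^{-1}$ to match terms, rather than attempting an unstructured index manipulation.
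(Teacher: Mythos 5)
Your proposal is correct and follows essentially the same route as the paper: reduce the statement to the cocycle identity $F^*\tau_2\eq\tau_1$ and establish it by comparing the two pie-slice diagrams via Pachner moves, using that the transparent defect carries the trivial cochain $\theta_d\,{\equiv}\,1$ so that the diagrams differ only in a $G$-labeled region of the same topology. The paper in fact obtains equality of the cocycles on the nose rather than merely up to coboundary, but your weaker fallback would also suffice.
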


\begin{proof}
Morphisms in the groupoid $\Gamma_{\!1}$ have the form \erf{g1mor}. Pick two 
such morphisms $(g_1^{},g_2^{},g,h)$ and $(g_1',g_2',g',h')$. Their 
images under $F$ are morphisms $(g_1,h)$ and $(g_1',h')$ in $\Gamma_{\!2}$,
of the form \erf{g2mor}. We must show that
  \be
  \tau_1^{}(\gamma_1^{},\gamma_2^{},\gamma_3^{},\gamma_4^{} ;
  g_1^{},g_2^{},g,h;g_1',g_2',g',h')
  = \tau_2^{}(\gamma_1^{}\gamma_2^{-1}\gamma_3^{},\gamma_4^{};g_1^{},h;g_1',h')
  \labl{tautau}
for all quadruples $(\gamma_1^{},\gamma_2^{},\gamma_3^{},\gamma_4^{})$ of 
elements of $G$.
Both sides of \erf{tautau} are obtained by evaluating appropriate diagrams of
the form of slices of pie with top and bottom faces identified. The diagram 
relevant to $\Gamma_{\!1}$ is similar 
to the one of figure \erf{Figure4}, but now with identified top and bottom,
so that $h_1^{} \eq h_2^{} \,{=:}\, h$ and $h_1' \eq h_2' \,{=:}\, h'$, as 
well as with $h_{12}^{} \eq g$ and $h_{12}' \eq g'$ being now elements of 
$G$; this diagram is shown on the left hand side of the picture \erf{Figure16} 
below. In the case of $\Gamma_{\!2}$ there is, besides the identified top and 
bottom faces, only one horizontal face, with edges labeled by elements $g_1^{}$ 
and $g_1'$ of $G$; this diagram is shown on the right hand side of the picture:
  \eqpic{Figure16}{370}{74} {
  \put(0,0)     {\Includepic{4}
  \put(-8.8,17) {\scriptsize $ \gamma_4^{} $}
  \put(-8.8,51) {\scriptsize $ \gamma_3^{} $}
  \put(-8.8,77) {\scriptsize $ \gamma_2^{} $}
  \put(-8.8,112){\scriptsize $ \gamma_1^{} $}
  \put(46,-5.5) {\scriptsize $ h $}
  \put(43,27.1) {\scriptsize $ g_2^{} $}
  \put(46,57.5) {\scriptsize $ g $}
  \put(43,87.8) {\scriptsize $ g_1^{} $}
  \put(46,116)  {\scriptsize $ h $}
  \put(108,21)  {\scriptsize $ h' $}
  \put(108,53.2){\scriptsize $ g_2' $}
  \put(109,83.7){\scriptsize $ g' $}
  \put(108,114) {\scriptsize $ g_1' $}
  \put(108,142) {\scriptsize $ h' $}
  }
  \put(220,0)     {\Includepic{16}
  \put(-9.8,107){\begin{turn}{-90} \scriptsize
                 $ \gamma_1^{}\gamma_2^{-1}\gamma_3^{} $\end{turn}}
  \put(-8.5,41) {\scriptsize $ \gamma_4^{} $}
  \put(46,55.5) {\scriptsize $ g_{1}^{} $}
  \put(46,-5.5) {\scriptsize $ h $}
  \put(46,116)  {\scriptsize $ h $}
  \put(108,21)  {\scriptsize $ h' $}
  \put(108,82.5){\scriptsize $ g_{1}' $}
  \put(108,142) {\scriptsize $ h' $}
  } }
It should be appreciated that the two diagrams only differ in a part that 
is of the same topology and only involves edges labeled by $G$. 
It is easily seen that there is a sequence of Pachner moves relating the
decompositions of the two diagrams in \erf{Figure16}. And as discussed in
Section \ref{ss:gcc}, invariance under Pachner moves holds (as a direct 
consequence of the axioms of group cohomology)  
for the decomposition of simplices into tetrahedra. Together it follows
that indeed the 2-cocycles on the left and right hand sides of \erf{tautau} 
have the same value.
\end{proof}

To summarize our findings: The surface defect labeled by $\iota \eq d$ and 
$\theta \eq 1$ can be omitted without changing the category that our 
linearization procedure associates to the circle. In other words, it
has the characteristic property of the monoidal unit for the fusion of 
surface defects, and thus of the transparent defect.

\subsection{Wilson line categories for the circle}\label{ss:cir}

A one-dimensional
DW manifold with the topology of a circle can contain finitely many marked 
points, corresponding to surface defects. Invoking fusion of defects, the 
situation with any number of marked points can be reduced to the one with a 
single marked point, which thereby constitutes one of the two
fundamental building blocks. In this subsection we finally compute the category 
of generalized Wilson lines corresponding to this building block and compare 
it with the results of \cite{fusV} for defects in topological field theories
of Reshetikhin-Turaev type.

Let, as before, the subinterval be labeled by $(G,\omega)$ and the defect 
by a group homomorphism $\iota\colon H\To G\Times G$ and a 2-cochain $\theta$ 
on $H$ satisfying $\rmd\theta \eq (\iota_1^*\omega)\,(\iota_2^*\omega)^{-1}$.
We can restrict our attention to indecomposable defects and therefore assume 
that $\iota$ is injective. For this situation
our formalism yields in a straightforward manner the groupoid
  \be
  G\sll G\times G \srr_{\!\iota^-_{}} H
  \labl{:G-GG-H}
that we already encountered in \erf{G-GG-H}. Its (projective) linearization,
which we denote by \wht, is the abelian category
of $G\Times G$-graded vector spaces with two commuting left actions (which are,
in general, projective): a left action of $G$ such that 
$g\,.\,V_{\gamma_1^{}\gamma_2^{}} \,{\subseteq}\,V_{g\gamma_1^{},g\gamma_2^{}}$
and a left $H$-action such that $h\,.\,V_{\gamma_1^{}\gamma_2^{}}\,{\subseteq}\, 
V_{\gamma_1^{}\iota_1^{}(h)^{-1}_{},g\gamma_2^{}\iota_2^{}(h)^{-1}_{}}$.

The category \wht\ has the interpretation of the category of generalized Wilson 
lines separating the defect labeled by $\iota$ and $\theta$ from the
transparent defect that we studied in the previous subsection. Pictorially,
fusion of surface defects replaces the configuration depicted on the right 
hand side of figure \erf{Figure17+Figure18}, in which four surface defects 
meet in a generalized Wilson line, by the configuration shown in the following
picture, in which the single non-trivial defect is on the right and the 
transparent defect on the left:
  \eqpic{Figure19}{140}{48} {
  \put(0,0) {\Includepic{19}   
  \put(79,85) {$ (H,\theta) $}
  } }

\medskip

We claim that the category produced by our geometric prescription is the same 
as the Wilson line category that is obtained in the formalism of \cite{fusV}. 
Let us thus compute the latter. According to formula \erf{Wtriv}, in the 
framework of \cite{fusV} a surface defect is described by a Witt 
trivialization. Now in the case of Dijkgraaf-Witten theories 
already the modular tensor category \C\ of bulk Wilson lines is, by definition,
Witt trivial. Indeed, $\C \eq \Z(\cala)$, where for the theory
based on $(G,\omega)$, \cala\ is the fusion category of finite-dimensional 
$G$-graded vector spaces with associativity constraint given by $\omega$ as in 
\erf{aomega}. It is not difficult to verify that the Witt trivialization of
\C\ implies the Witt trivialization
  \be
  \C\boti \C^\rev_{} \stackrel\simeq\longrightarrow \Z(\cala\boti\cala\op_{}) \,,
  \labl{WtAA}
where $\cala\op_{}$ is the fusion category \cala\ with opposite tensor product.

\medskip

Indecomposable surface defects separating the  modular tensor category 
$\C \eq \Z(\cala)$ from itself correspond \cite{fusV} to indecomposable module 
categories over $\cala\boti\cala\op_{}$ which is, as an abelian category, the 
category of $G\Times G$-graded vector spaces. According to the results 
reported in Section \ref{ss:mod}, such a module category is described by a 
subgroup $H \,{\le}\, G\Times G$ and a 2-cochain $\theta$ on $H$. This 
category can be realized as $\calm_{H,\theta} \eq A_{H,\theta}\Mod$, with the 
algebra $A_{H,\theta}$ as introduced in Section \ref{ss:mod}. The category 
$\calm_{H,\theta}$ of $A_{H,\theta}$-modules, seen as a module
category over $\cala\boti\cala\op_{}$, describes the non-transparent 
surface defect in the situation we are considering.

The analogous algebra in $\cala\boti\cala\op_{}$ that is relevant for the 
transparent defect can be deduced from the discussion in Section \ref{ss:tpd}: 
it is the algebra $A_d$ for the diagonal subgroup $G \,{\le}\, G\times G$ with 
trivial 2-cocycle $\theta \eq 1$. The category of Wilson lines described by the 
linearization of the groupoid \erf{:G-GG-H} should therefore be matched to the 
category 
  \be
  \Hom_{\cala\boti\cala\op_{}} (\cala,\calm_{H,\theta})
  \ee
of module functors or, equivalently, to the category of 
$A_d$-$A_{H,\theta}$-bimodules in $\cala\boti\cala\op_{}$. But the latter is 
nothing else than the category of $G\Times G$-graded vector spaces together 
with projective actions of $H$ and $G$.
 
This concludes the match of the categories that are obtained, for the case of 
the circle, in the present geometric approach and in \cite{fusV}.

\newpage

\appendix

\section{Module categories for non-injective group homomorphisms}

As described in Section \ref{ss:mod}, indecomposable module categories over 
the fusion category $\GVect$ are given by subgroups $H\,{\le}\, G$ and 
group cochains. On the other hand, in 
the definition of relative bundles a group homomorphism $\iota\colon H\To G$
enters. In the geometric context, it is not natural, and for many purposes,
e.g.\ for the discussion of fusion of surface defects, not appropriate, to 
require $\iota$ to be injective. This raises the question how corresponding 
module categories decompose into indecomposable ones if the group homomorphism 
$\iota$ is not injective. We discuss this issue in the simplest setting, 
in particular dropping Lagrangian data.

\medskip

We consider a morphism $\iota\colon H\To G$ of finite groups and the action 
groupoid $G \srr_{\!\iota^-} H$ with left action $h.\gamma \eq \gamma\, 
\iota(h)^{-1}$. The functor category $\calm\,{:=}\,[G \srr_{\!\iota^-} H,\Vect]$
is a module category over the monoidal category $\GVect$ as follows.
Objects in $\calm$ are $G$-graded vector spaces
$V \eq \bigoplus_{g\in G} V_g$ endowed with a left action of $H$ such that
  \be
  h. V_g\subset V_{g\cdot \iota(h)^{-1}} .
  \ee
The simple object $W_\gamma$ of $\GVect$ acts on such an object of $\calm$ 
by shifting the degrees of the homogeneous components by left 
multiplication by $\gamma$ and keeping the action of $H$:
  \be
  (W_\gamma\otimes V)_g= V_{\gamma\cdot g} \,.
  \ee

Any module category over $\GVect$ can be decomposed into indecomposable module 
categories. Let us see how this works for the module categories arising in
the way considered here. To this end we consider the normal subgroup 
$K \,{:=}\, \ker\iota\leq H$ and the exact sequence 
  \be
  1\rightarrow K\rightarrow H \stackrel\pi\rightarrow J \to 1
  \labl{1KHJ1}
of groups. This sequence is, in general, not split, and $H$ is thus not a
semidirect product. Still, we can choose a set-theoretic section 
$s\colon J\To H$ of $\pi$, which for convenience we require
to respect neutral elements, $s(e_J) \eq e_H$. We keep the section
$s$ fixed from now on. For each $j\iN J$ consider the group automorphism
  \be
  \alpha_j:= \mathrm{ad}_{s(j)}|_K^{} \,\in\mathrm{Aut}(K) \,.
  \ee
The automorphism $\alpha_j$ is not necessarily inner; its class
$[\overline\alpha_j]\iN \mathrm{Out}(K)=\mathrm{Aut}(K)/\mathrm{Inn}(K)$
does not depend on the choice of $s$. Moreover, introduce group elements
  \be
  c_{i,j} := s(i)\, s(j)\, s(ij)^{-1} \,\in K
  \ee
for each pair $i,j\iN J$. Then one has the relation
  \be
  \alpha_j\circ\alpha_{j'}= \mathrm{ad}_{c_{j,j'}}\circ \alpha_{jj'}
  \ee
and obvious coherence conditions on the elements $c_{ij}\iN K$; thus
$(\alpha_j,c_{i,j})$ defines a \emph{weak action} of the group $J$ on 
the group $K$.
We use this observation to rewrite the group $H$ in a convenient way. The map
  \be
  \begin{array}{rll}
  \psi:\quad J\times K&\!\!\to\!\!& H \\
  (j,k)&\!\!\mapsto\!\!&k\cdot s(j)
  \eear
  \ee
has the inverse
  \be
  \begin{array}{rll}
  \psi^{-1}:\quad H&\!\!\to\!\!& J\times K \\ 
  h &\!\!\mapsto\!\!& (\pi(h),h\cdot (s\pi(h))^{-1}) \,.
  \eear
  \ee
Define on the set $J\Times K$ a composition map
  \be
  (i,k)\cdot (j,k'):= (ij,k\alpha_i(k') c_{ij}) \,.
  \labl{.JK}
A direct calculation shows that the map $\psi$ is compatible with the product 
\erf{.JK} and with the product on $H$. Thus \erf{.JK} endows the set 
$J\Times K$ with the structure of a finite group isomorphic to $H$.
We denote this group structure by $J \,{\times_\alpha}\, K$, 
suppressing the group elements $c$ in the notation.
We will identify $J \,{\cong}\, G/K$ with a subgroup of $G$ in the sequel.

Thus we now replace $H$ by the isomorphic group $J \,{\times_\alpha}\, K$.
Then the left $J \,{\times_\alpha}\, K$-action on
$V \eq \bigoplus_{g\in G} V_g$ satisfies
  \be
  (j,k)(V_g)\subset V_{g.j^{-1}} .
  \ee
Moreover, each homogeneous component $V_g$ has a natural structure of a 
$K$-module from the action of elements of the form
$(e_J,k)\iN J \,{\times_\alpha}\, K$.

It is crucial to note that the so obtained $K$-module structures on different
homogeneous components $V_g$ are in general not isomorphic. They are related
by the action of elements of the form $(j,k)$ that
are \emph{twisted} intertwiners rather than morphisms
of $K$-modules. Comparing the group elements $ (e,k)\cdot (j,k') \eq (j,kk') $ 
and $ (j,k')(e,k'') \eq (j,k'\alpha_j(k'')) $ we deduce that
  \be
  (e,k)\cdot (j,k') = (j,k')(e,k'') \quad\text{ with }\quad
  k''= \alpha_j^{-1}((k')^{-1} k k') \,.
  \ee
Thus the action by $(j,k')$ is a twisted intertwiner relating a $K$-module in 
the isomorphism class $[V_g]$ to a $K$-module in the class
$ [V_{g.j}] \eq \overline\alpha_j^{\,-1}[V_g]$.
These two isomorphism class are different if $\alpha_j$ is outer.

\medskip

To find the simple objects of the category $[G \srr_{\!\iota^-} H,\Vect]$,
fix representatives $(\gamma_1,\gamma_2,...\,,\gamma_r)$ for the
orbits of the right action of $J$ on $G$. Then the isomorphism classes of
simple objects are in bijection with pairs $(\gamma_i,\chi)$
with $\chi\iN\widehat K$ a simple character of $K$. 
The action of $\GVect$ on the set of isomorphism classes of simple objects
$(\gamma_i,\chi)$ of the category $[G \srr_{\!\iota^-} H,\Vect]$ 
and thus its decomposition as a module category over
$\GVect$ can now be computed explicitly.

An instructive example is the group homomorphism 
$\iota\colon H \eq S_3\To \zet_2 \eq G$\,,
with $S_3$ the symmetric group on three letters, that is given 
by the sign function. The exact sequence \erf{1KHJ1} of groups is then
  \be
  1\longrightarrow A_3\cong\zet_3\longrightarrow S_3
  \stackrel{\rm sign}\longrightarrow \zet_2 \longrightarrow 1 \,.
  \ee
The simple objects of the resulting linearization $[\zet_2 \srr S_3 , \Vect]$ 
are labeled by the single orbit of the right action of $\zet_2$ on itself
and by one of the three irreducible characters $\{1,\zeta,\zeta^\vee\}$
of $\zet_3$. Since $S_3$ is a semidirect product, any section 
$s\colon \zet_2 \To S_3$, e.g.\ the one mapping the generator of $\zet_2$
to the permutation $\tau_{12}\iN S_3$, gives a genuine action
of $\zet_2$ on $\zet_3$, rather than only a weak action. Here the generator 
of $\zet_2$ acts as the outer automorphism of $\zet_3$ which exchanges the
non-trivial irreducible characters $\zeta$ and $\zeta^\vee$.
This fixes the $\zet_3$-representation on the homogeneous component
$V_1$ in terms of the $\zet_2$-rep\-re\-sentation on $V_0$ as shown in the
following table:
  \be
  \begin{array}{cc}
  \mbox{rep. on }V_0~ & ~\mbox{rep. on }V_1 \\[3pt] \hline {}\\[-8pt]
  1 & 1 \\ \zeta& \zeta^\vee\\ \zeta^\vee&\zeta
  \eear
  \labl{tabA}
We conclude that the abelian category $[\zet_2 \srr S_3,\Vect]$ has three 
isomorphism classes of simple objects, corresponding to the three lines of 
the table. 

To determine the structure of $[\zet_2 \srr S_3,\Vect]$ as a module category 
over $\zet_2$-\Vect\ we note that the action of the simple object $X_g$ in a 
non-trivial homogeneous component exchanges the two homogeneous components
$V_0$ and $V_1$. It therefore preserves the isomorphism class of simple
$[\zet_2 \srr S_3,\Vect]$-objects in the first 
line of \erf{tabA} and exchanges the two classes in the other two lines.
Thus the first line of \erf{tabA} gives us one indecomposable module category 
over $\zet_2$-\Vect\ with a single simple object, which corresponds to $\zet_2$ 
seen as a subgroup of itself. From the second and third lines of \erf{tabA}
we get another indecomposable module category having two simple objects,
corresponding to the trivial subgroup $\{e\}$ of $\zet_2$.

\vskip 3em

{\small
\noindent{\sc Acknowledgments:}
We thank Domenico Fiorenza, Jeffrey Morton and Jan Priel for helpful 
discussions. JF is still to some extent supported 
by VR under project no.\ 621-2009-3993. CS and AV are partially supported 
by the Collaborative Research Centre 676 ``Particles, Strings and the Early 
Universe - the Structure of Matter and Space-Time'' and by the DFG Priority 
Programme 1388 ``Representation Theory''. JF is grateful to Hamburg 
University, and in particular to CS, Astrid D\"orh\"ofer and Eva Kuhlmann, 
for their hospitality when part of this work was done.
}

\newpage

 \newcommand\wb{\,\linebreak[0]} \def\wB {$\,$\wb}
 \newcommand\Bi[2]    {\bibitem[#2]{#1}} 
 \newcommand\inBO[9]  {{\em #9}, in:\ {\em #1}, {#2}\ ({#3}, {#4} {#5}), p.\ {#6--#7} {\tt [#8]}}
 \newcommand\J[7]     {{\em #7}, {#1} {#2} ({#3}) {#4--#5} {{\tt [#6]}}}
 \newcommand\JO[6]    {{\em #6}, {#1} {#2} ({#3}) {#4--#5} }
 \newcommand\JP[7]    {{\em #7}, {#1} ({#3}) {{\tt [#6]}}}
 \newcommand\BOOK[4]  {{\em #1\/} ({#2}, {#3} {#4})}
 \newcommand\PhD[2]   {{\em #2}, Ph.D.\ thesis #1}
 \newcommand\Prep[2]  {{\em #2}, preprint {\tt #1}}
 \def\aagt  {Alg.\wB\&\wB Geom.\wb Topol.}     
 \def\adma  {Adv.\wb Math.}
 \def\comp  {Com\-mun.\wb Math.\wb Phys.}
 \def\imrn  {Int.\wb Math.\wb Res.\wb Notices}
 \def\jgap  {J.\wb Geom.\wB and\wB Phys.}
 \def\joal  {J.\wB Al\-ge\-bra}
 \def\jpaa  {J.\wB Pure\wB Appl.\wb Alg.}
 \def\momj  {Mos\-cow\wB Math.\wb J.}
 \def\npbp  {Nucl.\wb Phys.\ B (Proc.\wb Suppl.)}
 \def\nupb  {Nucl.\wb Phys.\ B}
 \def\phrb  {Phys.\wb Rev.\ B}
 \def\phrx  {Phys.\wb Rev.\ X}
 \def\toap  {Topology\wB Applic.}
 \def\trgr  {Trans\-form.\wB Groups}

\small

\end{document}